\newtheorem{theorem}{Theorem}
\newtheorem*{theorem*}{Theorem}
\newtheorem{lemma}{Lemma}
\theoremstyle{definition}
\newtheorem{definition}{Definition}
\newcommand{\hide}[1]{}
\newcommand{\vect}[1]{\textbf{#1}}
\newcommand{\s}[1]{\mathsf{#1}}
\newcommand{\NW}{\s{NW}}
\newcommand{\x}{\mathbf{x}}
\newcommand{\y}{\mathbf{y}}
\newcommand{\N}{\mathbb N}
\newcommand{\Z}{\mathbb Z}
\newcommand{\R}{\mathbb R}
\newcommand{\poly}[1]{\mathsf{poly}(#1)}
\title{Tractable Fragments of the Maximum Nash Welfare Problem\thanks{Work supported by the NSF Grant CCF-1942321 (CAREER) and ERC Starting Grant ScaleOpt}}
\author{Jugal Garg\footnote{University of Illinois at Urbana-Champaign, USA} \\
\texttt{\small jugal@illinois.edu} 
\and
Edin Husić\footnote{London School of Economics and Political Science, UK}\\
\texttt{\small e.husic@lse.ac.uk}
\and
Aniket Murhekar\footnote{University of Illinois at Urbana-Champaign, USA}\\
\texttt{\small aniket2@illinois.edu}
\and
László Végh\footnote{London School of Economics and Political Science, UK}\\
\texttt{\small l.vegh@lse.ac.uk}
}
\date{}
\begin{document}

\maketitle

\begin{abstract}
We study the problem of maximizing Nash welfare (MNW) while allocating indivisible goods to asymmetric agents. The Nash welfare of an allocation is the weighted geometric mean of agents' utilities, and the allocation with maximum Nash welfare is known to satisfy several desirable fairness and efficiency properties. However, computing such an MNW allocation is NP-hard, even for two agents with identical, additive valuations. Hence, we aim to identify tractable classes that either admit a PTAS, an FPTAS, or an exact polynomial-time algorithm. 

To this end, we design a PTAS for finding an MNW allocation for the case of asymmetric agents with identical, additive valuations, thus generalizing a similar result for symmetric agents \cite{identical-ptas-mnw}. Our techniques can also be adapted to give a PTAS for the problem of computing the optimal $p$-mean welfare. We also show that an MNW allocation can be computed exactly in polynomial time for identical agents with $k$-ary valuations when $k$ is a constant, where every agent has at most $k$ different values for the goods. 
Next, we consider the special case where every agent finds at most two goods valuable, and show that this class admits an efficient algorithm, even for general monotone valuations. In contrast, we note that when agents can value three or more goods, maximizing Nash welfare is NP-hard, even when agents are symmetric and have additive valuations \cite{amanatidis2020mnwefx}, showing our algorithmic result is essentially tight.

Finally, we show that for constantly many asymmetric agents with additive valuations, the MNW problem admits an FPTAS.
\end{abstract}

\section{Introduction}
Fair division of resources is a fundamental problem studied across several disciplines, including computer science and economics. It is concerned with allocating a set of goods among $n$ agents in a fair and efficient manner. Discrete fair division studies the setting where goods are \textit{indivisible}, i.e., cannot be shared among agents. This setting has found several applications, such as allocating donated food items to food banks~\cite{prendergast2016foodbank}, division of inheritance, divorce settlements, course allocation~\cite{othman2010coursealloc}, etc.

Discrete fair division has been the focus of recent work with many remarkable results. We now have a fairly good understanding of the case when agents are \textit{symmetric}, i.e., having equal entitlements or weights. However, many situations demand modeling agents as \textit{asymmetric} and having different entitlements. For instance, fair division of assets of a company among its various stakeholders who hold different stakes. Additionally, an agent can be used to represent a group of individuals with the size of the group being the agent's entitlement. This idea has been used in models for bargaining in committees~\cite{laruelle07mnw-application}, maintaining fairness guarantees among ethnic groups~\cite{benabbou18publichousing}, etc. From a computational viewpoint, fair division of goods to asymmetric agents is relatively less understood, but has gained interest in recent years~\cite{farhadi19entitlements,garg20submodular,chak2020wef1,ChakrabortySS21,Aziz2020survey,BabaioffEF21,GargHV21Rado}.

The concept of Maximum Nash welfare (MNW) turns out to be central in the allocation of goods, both divisible and indivisible. The Nash welfare of an allocation is the weighted geometric mean of the utilities of the agents, and the allocation with maximum Nash welfare is known to satisfy several desirable fairness and efficiency properties~\cite{Kaneko1979TheNS,Moulin2003,caragiannis16nsw-ef1}. 

The Nash welfare arose in several different contexts, including Nash's solution to the bargaining problem~\cite{nashbargaining}, proportional fairness in networking~\cite{kelly97charging}, and Varian's work on competitive equilibrium with equal incomes~\cite{VARIAN1974-efpo}. In the case of divisible goods, an MNW allocation satisfies the strong fairness property of weighted envy-freeness (wEF) and the efficiency property of Pareto-optimality (PO). When goods are indivisible and agents are symmetric, Caragiannis et al.~\cite{caragiannis16nsw-ef1} showed the `unreasonable fairness' of an MNW allocation by proving that it satisfies a relaxation of EF called envy-freeness up to one good (EF1) and is PO. These desirable properties also carry over to the asymmetric case, where the MNW allocation is PO and satisfies a relaxation of wEF called weak-weighted envy-freeness up to one good (wwEF1)~\cite{chak2020wef1}. Furthermore, the Nash welfare function has been used axiomatically in many applications where agents have differing entitlements, e.g., in bargaining with committees \cite{laruelle07mnw-application,thomson09mnw-application,chae10mnw-application} and allocation of resources such as water~\cite{houba13mnw-application,degefu16mnw-application}. The popular fair division website Spliddit\footnote{http://www.spliddit.org/} also uses the MNW rule to fairly allocate goods.  

Since Nash welfare is a well-motivated, practically useful, and provably fair and efficient solution concept for allocating goods, it is important to study its computational complexity. Computing the Nash optimal allocation is NP-hard even for two identical agents (via a reduction from the Partition problem), and is APX-hard~\cite{lee2015nsw-apx,GargHM19} (hard to approximate) for additive valuations. In light of this hardness, the community turned towards developing approximation algorithms for the problem. A series of works~\cite{cole2015nswapprox,cole17nswapprox,Barman18FFEA,McGlaughlinG20} presented constant-factor approximation algorithms for the MNW problem for symmetric agents with additive valuations. Further, for the case of symmetric agents with identical valuations, Nguyen and Rothe~\cite{identical-ptas-mnw} showed that the MNW problem admits a PTAS, which computes an allocation with Nash welfare within $\varepsilon$ factor of the optimal in polynomial-time when $\varepsilon$ is a fixed constant. Additionally, Barman et al.~\cite{barman2018binarynsw} gave a greedy 1.061-approximation algorithm for this case. When the number of agents is constant,  Nguyen et al.~\cite{nguyen2014nsw-np} presented an FPTAS for symmetric agents.

While the symmetric case is well-understood, the asymmetric case is much harder. For example, the best approximation factor for the symmetric case is 1.45 \cite{Barman18FFEA}, while $O(n)$ remains the best known approximation factor for the asymmetric case \cite{garg20submodular}. Developing a constant-factor approximation algorithm is still an open problem; similarly, developing a PTAS for asymmetric agents with identical valuations has been open since the work of~\cite{identical-ptas-mnw}. A natural question to ask is which results on the symmetric case extend to the asymmetric one. However, the ideas in the PTAS for MNW with symmetric agents \cite{identical-ptas-mnw}, which in turn rely on the PTAS for scheduling jobs on identical machines in \cite{ptas-identical1998alon}, do not seem to extend to the asymmetric case.

Our first contribution is to show that for asymmetric agents with identical, additive valuations, the MNW problem admits a PTAS. Further, our approach generalizes in two ways. First, we obtain a PTAS for the problem of maximizing the $p$-mean welfare, which is a generalization of welfare functions, including utilitarian, Nash and egalitarian. Secondly, we obtain an exact polynomial-time algorithm when agents have identical $k$-ary valuations with constant $k$, where agents have at most $k$ different values for the goods. This class was recently shown to admit an efficient algorithm for obtaining EF1+PO allocation of goods to non-identical agents~\cite{GargM21}. A common criticism of algorithms for approximating Nash welfare is that the resulting allocation may not retain the fairness guarantees offered by the MNW allocation. Addressing this criticism, we show a novel way to extend our PTAS to obtain an approximately Nash optimal allocation which simultaneously satisfies wwEF1 for the case of identical, asymmetric agents.

The above results and techniques suggest that restricting the number of values of goods makes the MNW problem tractable, at least for identical agents. We investigate if this intuition continues to hold for non-identical agents. Indeed, it is known from~\cite{darmann2014binary,barman2018binarynsw} that when agents have binary valuations, the MNW allocation is polynomial-time computable. 
We study the setting when each agent finds only a small subset of the goods valuable. We present a strongly polynomial-time algorithm for asymmetric non-identical agents when each agent values at most two goods. In fact, our algorithm applies when all agents have monotone valuations, which is a large class of valuation functions generalizing additive valuations. In contrast, we note that the MNW problem is NP-hard, even when agents are symmetric and each agent finds only three goods valuable with additive valuations \cite{amanatidis2020mnwefx}. This shows our algorithmic result is essentially tight. Finally, we show that the FPTAS of Nguyen et al.~\cite{nguyen2014nsw-np} for the MNW problem with constantly many symmetric agents with additive valuations extends to the asymmetric case.

To sum up, we identify and present algorithms for tractable classes of the maximum Nash welfare problem for asymmetric agents; see Table~\ref{tab:results}. These classes, namely identical valuations and few-valuable instances, are useful from a practical aspect too. It is often common for agents to have identical preferences but different priorities, e.g. for expressing preferences for natural resources like water \cite{degefu16mnw-application,houba13mnw-application} or housing~\cite{benabbou18publichousing}. Likewise, it can be easier for agents to assign distinct `rank' to the goods on a small scale of 1 through $k$ instead of assigning them explicit numerical values. These ranks or scores then be transformed into a $k$-valuable instance $\{1,2\dots,k\}$ to $\mathbb{R}_{\ge 0}$, such as Borda scores or Lexicographic scores (e.g. \cite{darmann2014binary} study the MNW problem with such scoring functions). Finally, the number of agents is a small constant in many practical discrete fair division settings. Our results, therefore, contribute towards completing the picture of tractable classes of the computationally-hard, yet important, maximum Nash welfare problem for the relatively less-studied case of asymmetric agents. 

\begin{table}[h]
\centering
\begin{tabular}{|p{0.3\linewidth}||c|c|}\hline
	\textbf{Instance Type} & \textbf{Symmetric} & \textbf{Asymmetric}  \\\hhline{|=||=|=|}
	Additive, Identical, $n=2$ & NP-hard (Partition problem) & NP-hard (Partition problem) \\\hline
	Additive, Identical & PTAS \cite{identical-ptas-mnw} & PTAS \textbf{(Thm~\ref{thm:ptas})} \\\hline
	Additive, Identical, $k$-ary with constant $k$ & $\poly{n,m}$ \textbf{(Thm~\ref{thm:mnwkary})} & $\poly{n,m}$ \textbf{(Thm~\ref{thm:mnwkary})} \\\hline
	Additive, Constant $n$ & FPTAS \cite{nguyen2014nsw-np} & FPTAS \textbf{(Thm~\ref{thm:fptas})} \\\hline
	Monotone, 2-valuable & $\poly{n,m}$ \textbf{(Thm~\ref{thm:2valuable})} & $\poly{n,m}$ \textbf{(Thm~\ref{thm:2valuable})} \\\hline
	Additive, 3-valuable & NP-hard \cite{amanatidis2020mnwefx} & NP-hard \cite{amanatidis2020mnwefx} \\\hline
\end{tabular}
\caption{Best-known results for the MNW problem}\label{tab:results}
\end{table}

\section{Preliminaries}
\paragraph{Problem instance.}
An instance $(N,M,V,\{\eta_i\}_{i\in N})$ of the maximum Nash welfare problem with asymmetric agents is defined by: (i) the set $N = [n]$ of $n\in\N$ agents, (ii) the set $M = [m]$ of $m\in\N$ indivisible goods, (iii) the set $V = \{v_1,\dots,v_n\}$, where $v_i : 2^M \rightarrow \R_{\ge 0}$ is the valuation or utility function of agent $i\in N$, and (iv) the set $\{\eta_i\}_{i\in N}$, where $\eta_i \in \mathbb{R}_{>0}$ is the \textit{weight} or entitlement of agent $i\in N$. We assume valuations are \textit{normalized}, i.e., $v_i(\emptyset) = $ for all $i\in N$. For ease of notation, we write $v_{ij}$ instead of $v_i(\{j\})$ for an agent $i$ and a good $j$.

\begin{definition}\label{def:instancetype} (Instance Type) \normalfont
An instance $(N,M,V,\{\eta_i\}_{i\in N})$ is said to be:
\begin{enumerate}
\item \textit{Additive}, if $\forall i\in N$ and $\forall S\subseteq M$, $v_i(S) = \sum_{j\in S} v_{ij}$.
\item \textit{Monotone}, if for every $i\in N$, for any $S\subseteq M$ and $j\in M$, $v_i(S\cup \{j\}) \ge v_i(S)$.
\item $k$-ary, if $\forall i\in N$, $v_i$ is additive, and $|\{v_{ij} : j\in M\}| \le k$.
\item $k$-valued, if $\forall i\in N$, $v_i$ is additive, and $\exists U \subseteq \N$ with $|U| \le k$ s.t. $\forall i\in N, j\in M$, $v_{ij} \in U$.
\item $k$-valuable, if 
for every $i\in N$, $v_i$ is monotone and there exists a set $T_i \subseteq M$ with $|T_i| \le k$ such that $v_i(S) = v_i(S\cap T_i)$ for every $S\subseteq M$.
\end{enumerate}
\end{definition}
Note that for identical agents with additive valuations, a $k$-valuable instance is also a $k$-valued instance. Also, note that a $k$-valued instance is $k$-ary, but the containment does not hold in the reverse direction.

\paragraph{Allocation.}
An \textit{allocation} $\x$ of goods to agents is an $n$-partition $(\x_1, \dots, \x_n)$ of the goods, where agent $i$ is allotted $\x_i \subseteq M$, and gets a total value of $v_i(\x_i)$.

\paragraph{Pareto-optimality.}
An allocation $\y$ dominates an allocation $\x$ if $v_i(\y_i) \ge v_i(\x_i)$, for all $i\in N$ and there exists $h$ such that $v_h(\y_h) > v_h(\x_h)$. An allocation is said to be \textit{Pareto optimal} (PO) if no allocation dominates it. 

\paragraph{Maximum Nash Welfare.}
The Nash welfare (NW) of an allocation $\x$ is the weighted geometric mean of the agents' utilities under $\x$, i.e.:
\[\NW(\x) = \big(\prod_{i\in N} v_i(\x_i)^{\eta_i}\big)^{1/\sum_{i\in N}\eta_i}.\] 
An allocation $\x^*$ that maximizes the NW is called a MNW allocation or a Nash optimal allocation. An allocation $\x$ is an $\alpha$-approximation for $\alpha\in (0,1]$ if $\NW(\x) \ge \alpha\cdot\NW(\x^*)$. Note that the optimal allocation is invariant under a uniform scaling of the agent weights. We assume that the optimum Nash welfare of an instance is non-zero for the notion of approximation to be meaningful. We note here that when the optimum Nash welfare is zero, not all allocations satisfy the desirable fairness and efficiency notions and a fair allocation must be carefully selected~\cite{caragiannis16nsw-ef1}.

\paragraph{Weak weighted envy-freeness up to one good (wwEF1).} An allocation $\x$ is said to satisfy wwEF1 if for all agents $i$ and $h$, there exists a good $j\in \x_h$ such that:
\[\frac{v_i(\x_i)}{\eta_i} \ge  \frac{v_i(\x_h)}{\eta_h} - \frac{v_{ij}}{\min\{\eta_i,\eta_h\}}\]
It is known that the MNW allocation satisfies wwEF1 \cite{chak2020wef1}.

\section{PTAS for Identical Agents}\label{sec:ptas}

Our main result of this section is:
\begin{theorem}\label{thm:ptas}
The Nash welfare problem for asymmetric agents with identical, additive valuations admits a polynomial time approximation scheme (PTAS).
\end{theorem}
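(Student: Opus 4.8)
The plan is to combine a guess-and-rescale preprocessing step with a ``large/small'' good decomposition in the style of the PTAS for scheduling on identical machines used in \cite{identical-ptas-mnw,ptas-identical1998alon}; the genuinely new difficulty is that the weights $\eta_i$ make an agent's natural ``target utility'' depend on the agent, so no single threshold separating ``large'' from ``small'' works.

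\emph{Reductions.} Identical valuations mean a single additive function; write $v_j$ for the common value of good $j$, fix an MNW allocation $\x^*$, and set $u^*_i:=v(\x^*_i)$. Maximizing $\NW$ is the same as maximizing $\sum_{i\in N}\eta_i\ln v(\x_i)$. The optimum being nonzero forces every $u^*_i>0$, so $m\ge n$ and we may assume every good is allocated; then $v_j\le u^*_h\le u^*_1$ for the agent $h$ holding $j$ in $\x^*$. A bubble-sort exchange argument shows that, when $\eta_1\ge\cdots\ge\eta_n$, some MNW allocation has $u^*_1\ge\cdots\ge u^*_n$: if $\eta_i\ge\eta_h$ but $u^*_i<u^*_h$, swapping the two bundles multiplies $\NW$ by $(u^*_h/u^*_i)^{\eta_i-\eta_h}\ge1$. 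Finally, trying all $O(\varepsilon^{-1}\log(\sum_j v_j/\min_j v_j))$ powers of $1+\varepsilon$ as a guess for $u^*_1$ and rescaling, we may assume $u^*_1\in[1,1+\varepsilon)$, hence $v_j<1+\varepsilon$ for all $j$.

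\emph{Bounded-spread core.} Suppose first all $u^*_i$ lie in a fixed-ratio window $[c,1+\varepsilon)$ with $c=c(\varepsilon)>0$. Call $j$ \emph{large} if $v_j\ge\varepsilon^2 c$ and \emph{small} otherwise. A holder of a large good has utility $\ge\varepsilon^2 c$, so no agent holds more than $O(1/\varepsilon^2)$ large goods; rounding each large value down to the nearest power of $1+\varepsilon$ loses at most a factor $1+\varepsilon$ and leaves only $O_\varepsilon(1)$ distinct large values in $[\varepsilon^2 c,1+\varepsilon)$. The small goods are treated as divisible ``sand'': in $\x^*$ the small goods in a bundle sum to at most $u^*_i$, and distributing the true small goods greedily reproduces any prescribed split up to an additive $\varepsilon^2 c\le\varepsilon\cdot(\text{target})$, i.e.\ a factor $1+\varepsilon$, per agent. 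After the rounding there are only $O_\varepsilon(1)$ possible multisets of large goods in a bundle, so one can enumerate in $\poly{n}$ time all patterns assigning large goods to the $n$ agents (the count of each multiset), for each check feasibility against the available goods, distribute the divisible sand by weighted water-filling, and keep the best; inserting the real small goods then costs another factor $1+\varepsilon$. Matching patterns to agents in weight order via the exchange fact gives a $(1+\varepsilon)^{O(1)}$-approximation in polynomial time. (The large-good part alone could instead be handed directly to the exact algorithm of Theorem~\ref{thm:mnwkary}.)

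\emph{Reaching all spreads, and the main obstacle.} In general $u^*_n$ may be far below $u^*_1$, and no single threshold is negligible for everyone. The remedy is to partition the weight-sorted agents into buckets by the scale of $u^*_i$ (bucket $k$ holding agents with $u^*_i\in[\varepsilon^k u^*_1,\varepsilon^{k-1}u^*_1)$), apply the bounded-spread core inside each bucket after rescaling its window to $[\varepsilon,1)$, and let goods that are ``small'' for a bucket cascade down to the next. Since each agent lies in exactly one bucket and has its utility approximated to a factor $(1+\varepsilon)^{O(1)}$, the total loss in $\prod_i u_i^{\eta_i}$, hence in $\NW$, is only $(1+\varepsilon)^{O(1)}$; the number of buckets is polynomial in the input size and multiplies only the running time, and rescaling $\varepsilon$ yields the PTAS. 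The delicate part -- which I expect to be the main obstacle -- is coordinating the buckets: one must pin down which goods each bucket's subinstance may use so that the subinstances partition $M$ (no good used twice, yet a good that is ``large'' for a lower bucket but ``small'' for its parent is counted exactly once), keep the overall number of bucket-by-bucket guesses polynomial, and check that restricting $\x^*$ itself to this bucketed, rounded, sanded form still loses only a bounded factor, so that the per-bucket errors telescope into a single $(1+\varepsilon)^{O(1)}$ guarantee.
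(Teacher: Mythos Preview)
Your proposal identifies the right difficulty --- that different agents have different natural target scales, so no single large/small threshold suffices --- but it does not actually resolve it. The bucketing scheme you sketch in the last paragraph is not a proof; you yourself flag as the ``main obstacle'' precisely the step that carries all the weight, namely coordinating the buckets so that (i) the agent-to-bucket assignment is known, (ii) goods are partitioned consistently across scales, and (iii) the total number of guesses stays polynomial. In fact (iii) appears to fail: the number $B$ of geometric buckets is $\Theta(\log_{1/\varepsilon}(u^*_1/u^*_n))$, which is polynomial in the input \emph{bit-length} but not a constant; even with the monotonicity $u^*_1\ge\cdots\ge u^*_n$, choosing the $B-1$ breakpoints that split the sorted agents into buckets already gives $n^{\Theta(B)}$ possibilities, which is super-polynomial. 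On top of that you must, for each bucket, guess how much ``sand'' (total small-good value) it is allowed to retain versus pass down, and these guesses multiply across buckets. So the sketch, as written, yields at best a quasi-polynomial scheme.

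The paper does not bucket at all. It adapts the PTAS for scheduling on \emph{uniformly related} machines (Epstein--Sgall), not the identical-machines PTAS you invoke: each partial union $\bigcup_{i'\le i}\x_{i'}$ is summarized by a \emph{configuration} $(w,\bm m)$ where $w$ is the current order of magnitude and $\bm m$ records multiplicities of rounded large values plus one ``sand'' entry; a scaling operation lets $w$ increase as you move from agent $i-1$ to agent $i$. These configurations form the vertices of a layered DAG (Definition~\ref{def:config-graph}), with at most $(m+1)^{1/\delta^2}$ vertices per layer (Lemma~\ref{lem:number-configs}), and an edge from layer $i-1$ to layer $i$ carries weight $V(w',\bm m'-\bm m'')^{\eta_i}$. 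A single longest-path computation then replaces all your nested guesses, and Lemmas~\ref{lem:succesive-error} and~\ref{lem:approximation} bound the per-agent multiplicative error by $1\pm O(\delta)$ uniformly across scales. In short, the paper handles the multi-scale issue by letting the scale $w$ evolve dynamically along a path in a polynomial-size DAG, rather than by pre-partitioning agents into scale classes; this is exactly what makes the running time polynomial.
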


A natural first step would be to extend the approach in \cite{identical-ptas-mnw} to the asymmetric case. The main idea, which relies on the PTAS for scheduling jobs on identical parallel machines \cite{ptas-identical1998alon}, is to round the values of the goods into a fixed constant number of values and solve the rounded instance exactly using Lenstra's algorithm for integer programs. Similar ideas have been also used for developing PTAS for scheduling problems~\cite{ptas-rounding1986shmoys,ptas-rounding1987shmoys}.

However, this approach does not extend directly for the asymmetric case. Hence, we instead develop a PTAS inspired from the PTAS for scheduling jobs on uniformly related parallel machines~\cite{ptas99epstein}. One can first try to apply their PTAS directly as a black-box for the MNW problem by taking the logarithm of the Nash welfare objective. For additive valuations, this translates to replacing $v_i(\x_i)$ with $\log v_i(\x_i)$. First, it is unclear what to do when $v_i(\x_i) = 0$. Next, their PTAS gives an allocation which \textit{approximates the logarithm} of the optimal Nash welfare to a (multiplicative) factor of $\varepsilon$, which does not translate to a multiplicative $\varepsilon$-approximation for the Nash welfare objective. Hence their PTAS cannot be used directly and must be carefully adapted for the MNW problem. 
We need new definitions (Def 5) and new technical lemmas (Lemmas 1, 6, and proof of Theorem 1).

The idea is to first round the values of all goods into a fixed constant number of values. Such a rounding leaves us with at most a polynomially many types of different subsets, which can be enumerated efficiently. We then construct a layered directed graph where vertices on the $i^{th}$ level represent the union of the allocations of the first 
$i$ agents, and an edge from the $(i-1)^{th}$ to the $i^{th}$ layer represents the contribution of agent $i$ to the Nash welfare. Thus, computing an MNW allocation turns out to be equivalent to searching for a path that maximizes the product of weights, which in turn is equivalent to a shortest path computation in an acyclic graph, which admits efficient algorithms. Finally, we show how to recover an allocation of the original instance from the rounded instance which provides a good approximation to the optimal Nash welfare. 

Since the agents have identical valuations, we use $u_j$ to denote the (common) value of a good $j\in M$ for all the agents. Further, we can assume w.l.o.g. that all $u_j > 0$. Suppose we order the agents so that their weights are non-decreasing. We first show that there is an MNW allocation in which the (non-zero) valuations of the agents are also non-decreasing, thus restricting the search space of allocations. Formally:

\begin{lemma}\label{lem:ordering}
Let the agents be ordered so that the weights $\eta_i$ are non-decreasing with $i$. Then, there exists an MNW allocation $\x$ s.t. for any $1\le i < j \le n$, with $v_i(\x_i),v_j(\x_j)>0$, we have $v_i(\x_i)\le v_j(\x_j)$.
\end{lemma}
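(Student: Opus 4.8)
The plan is to prove this by an exchange argument. Suppose $\x$ is an MNW allocation, and suppose there exist indices $i < j$ (hence $\eta_i \le \eta_j$) with $0 < v_j(\x_j) < v_i(\x_i)$; I want to show we can modify $\x$ into another MNW allocation with strictly fewer such "inverted" pairs, or with some other progress measure strictly decreased, so that iterating terminates at an allocation with the desired monotonicity. The natural move is simply to swap the bundles of agents $i$ and $j$: set $\x_i' = \x_j$, $\x_j' = \x_i$, and leave everyone else untouched. Since valuations are identical, $v_i(\x_j) = v_j(\x_j)$ and $v_j(\x_i) = v_i(\x_i)$, so after the swap agent $i$ has value $v_j(\x_j)$ and agent $j$ has value $v_i(\x_i)$.

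The key computation is to check that this swap does not decrease Nash welfare. Writing $a = v_i(\x_i)$, $b = v_j(\x_j)$ with $a > b > 0$, the only change in $\prod_k v_k(\x_k)^{\eta_k}$ is the factor $a^{\eta_i} b^{\eta_j}$ becoming $b^{\eta_i} a^{\eta_j}$. So I need $b^{\eta_i} a^{\eta_j} \ge a^{\eta_i} b^{\eta_j}$, i.e. $(a/b)^{\eta_j} \ge (a/b)^{\eta_i}$, which holds because $a/b > 1$ and $\eta_j \ge \eta_i$. Hence the swapped allocation is also MNW (in fact NW is weakly larger, so if the original is optimal so is the new one). This shows: whenever weights are "out of order" relative to values, swapping is a harmless move.

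The remaining point is to argue that finitely many such swaps reach an allocation satisfying the claim — i.e., that the process terminates. Here I would use a potential argument: among all MNW allocations, pick one that lexicographically sorts... more carefully, pick an MNW allocation $\x$ minimizing $\sum_i \eta_i \cdot f(v_i(\x_i))$ for a suitable strictly increasing $f$ won't quite do it since NW is already fixed; instead I would pick, among all MNW allocations, one minimizing the number of inverted pairs $\{(i,j) : i<j,\ 0<v_j(\x_j)<v_i(\x_i)\}$, or minimizing $\sum_{i} i \cdot v_i(\x_i)$ over MNW allocations (a finite set, since there are finitely many allocations). If this minimizer still had an inverted pair, pick the inversion and swap; one checks the swap strictly decreases $\sum_i i \cdot v_i(\x_i)$ (moving the larger value $a$ to the larger index $j$ and the smaller value $b$ to the smaller index $i$ changes this sum by $(j-i)(a-b) > 0$... wait, that increases it, so I would instead \emph{maximize} $\sum_i i \cdot v_i(\x_i)$ over MNW allocations), contradicting optimality of the potential. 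One subtlety to handle cleanly: zero-valued bundles are excluded from the monotonicity requirement, so the potential and the swap argument must only concern agents with positive value; since a swap between two positive-valued bundles keeps both positive, this causes no trouble, but I would state it carefully.

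The main obstacle is not any single step — each is short — but making the termination/extremal argument airtight: one must choose the progress measure so that (a) it is optimized over the finite set of MNW allocations, (b) the swap strictly improves it, and (c) an optimizer of the measure necessarily has no inverted pair among positive-valued bundles. Taking $\Phi(\x) = \sum_{i : v_i(\x_i) > 0} i$ weighted appropriately, or more simply $\Phi(\x) = \sum_i i\cdot v_i(\x_i)$ maximized over MNW allocations, handles all three, and then an optimal-$\Phi$ MNW allocation is the desired $\x$.
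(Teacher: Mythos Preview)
Your proposal is correct and follows essentially the same exchange argument as the paper: swap the bundles of an inverted pair $i<j$ and observe that, since valuations are identical and $\eta_i\le\eta_j$, the Nash welfare weakly increases (the paper writes this as $\log\NW(\x')-\log\NW(\x)=\frac{\eta_j-\eta_i}{\sum_k\eta_k}\log(v_i/v_j)\ge 0$, equivalent to your $(a/b)^{\eta_j}\ge(a/b)^{\eta_i}$). Your extremal/potential argument for termination is a clean variant of the paper's more informal ``perform at most $n-1$ such swaps''; if anything, your version is slightly more careful.
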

\begin{proof}
Suppose for some $1\le i < j \le n$, we have $v_i = v_i(\x_i) > v_j(\x_j) = v_j > 0$ for an MNW allocation $\x$. Consider the allocation $\x'$ obtained from $\x$ by swapping the bundles of $i$ and $j$. Now notice: 
\begin{equation*}
\begin{aligned}
&\log \NW(\x') - \log \NW(\x) = \frac{\log  (v_j^{\eta_i} v_i^{\eta_j}) - \log (v_i^{\eta_i} v_j^{\eta_j})}{\sum_k \eta_k} = \frac{\eta_j - \eta_i}{\sum_k \eta_k}\log(v_i/v_j) \ge 0,
\end{aligned}
\end{equation*}
since $\eta_i \le \eta_j$ and $v_i > v_j$. Thus, the Nash welfare of $\x'$ is at least as much as that of $\x$. Hence, from any optimal $\x$ we can arrive at an optimal allocation with ordered valuations by performing at most $n-1$ such swaps.
\end{proof}

\subsection{Key Definitions}
We now introduce several terms needed for our PTAS. Let $\delta > 0$ be a constant such that $\lambda = 1/\delta$ is an even integer. Here, $\delta$ represents the relative rounding precision.

We approximately represent the goods by rounding their values as follows. We use a parameter $w$, which is either 0 or an integral power of 2, to represent the order of magnitude. Given $w$, we replace each good of value more than $\delta w$ with another good with value rounded to the next higher integral multiple of $\delta^2 w$. The remaining `small' goods are replaced with an appropriate number of goods of value $\delta w$. Thus, a set of goods can be represented using a vector $\bm{m} = (m_\lambda,m_{\lambda+1}\dots,m_{\lambda^2})$ where $m_i$ denotes the number of rounded goods of value $i\delta^2 w$. Such a vector along with $w$ is called a configuration. We use $w=0$ only for representing the empty set. 

Our algorithm proceeds agent by agent, where we use the smallest possible $w$ to represent the set of goods allocated so far (these are the \textit{principal} configurations defined below). As we proceed, the order of magnitude may increase. To avoid rounding the values repeatedly which could increase error, we round the values of a good in advance with respect to the largest $w$ such that the value of the good is at least $\delta w$, as described below:

\begin{definition}\label{def:rounding}(Rounding function) \normalfont The rounding function $r(u) : \Z_{\ge 0} \rightarrow \mathbb{R}_{\ge 0}$ is defined as follows. For $u>0$, let $w$ be the largest integral power of 2 s.t. $u > \delta w$. Let $i$ be the smallest integer s.t. $u \le i\delta^2 w$. Then $r(u) = i\delta^2 w$. Further, $r(0) = 0$.
\end{definition}
Note that whenever $w'$ is an integral power of 2 and $\delta w' < u \le w'$ holds, $r(u)$ is an integral multiple of $\delta^2 w'$. To keep the rounding error due to the small-valued goods as low as possible, we always use the smallest possible $w$ to represent the set of goods allocated. For this, we introduce principal configurations.

\begin{definition}\label{def:configs}(Configurations) \normalfont
\begin{itemize}
\item A \textit{configuration} is a pair $\alpha = (w,(m_\lambda,\dots,m_{\lambda^2}))$, where $w$ is 0 or an integral power of 2, and $\bm{m} = (m_\lambda,\dots,m_{\lambda^2})$ is a vector of non-negative integers.
\item The value of a set of goods $A$ is $V(A) = \sum_{j\in A} u_j$. \\ The rounded value of $A$ is $Vr(A) = \sum_{j\in A} r(u_j)$. \\ The value of a configuration $(w,\bm{m})$ is $V(w,\bm{m}) = \sum_{i=\lambda}^{\lambda^2} m_i \cdot i \delta^2 w$. 
\item For a set of goods $A$, a set of \textit{small goods} w.r.t. $w$ is $A(w) = \{j\in A: u_j \le \delta w\}$.
\item A configuration $(w,\bm{m})$ \textit{represents} $A\subseteq M$ if:
\begin{itemize}
    \item[(i)] for all $j\in A$, $u_j\le w$,
    \item[(ii)] for any $\lambda < i \le \lambda^2$, the number of goods $j\in A$ with $r(u_j)=i\delta^2 w$ equals $m_i$.
    \item[(iii)] $m_{\lambda} \in \{\lfloor Vr(A(w))/(\delta w)\rfloor, \lceil Vr(A(w))/(\delta w)\rceil\}$, which is equivalent to $|Vr(A(w)) - m_{\lambda}\delta w|<\delta w$
\end{itemize}
\item A \textit{principal configuration of $A\subseteq M$} is a configuration $\alpha(A) = (w,\bm{m})$ representing $A$ with the smallest possible $w$. There can only be two such configurations (see (iii) above), and of them, the principal configuration is the one with larger $m_{\lambda}$, i.e., $m_{\lambda} = \lceil Vr(A(w))/(\delta w)\rceil$.
\item A configuration $(w,\bm{m})$ is a \textit{principal configuration} if it is the principal configuration of some $A\subseteq M$, i.e., if there exists an $A\subseteq M$ for which $\alpha(A) = (w,\bm{m})$.
\end{itemize}
\end{definition}

Thus, for every set of goods and large enough $w$, there are exactly one or two configurations representing it, and they can be efficiently computed. Similarly, principal configurations can be efficiently computed. Also, observe from above that $w=0$ is used only for representing an empty set.

A principal configuration can represent several subsets that are equivalent from the point of view of Nash welfare. The key idea is to efficiently enumerate principal configurations instead of enumerating the exponentially many subsets of goods. We show how to bound their number.

\begin{lemma}\label{lem:number-configs}
The set $S$ of principal configurations can be efficiently computed. Further, $|S|\le(m+1)^{\lambda^2}$.
\end{lemma}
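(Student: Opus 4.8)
The plan is to show that every principal configuration is uniquely determined by a small tuple of bounded nonnegative integers, so that the whole set $S$ can be enumerated by a simple loop, and to read off the bound $|S| \le (m+1)^{\lambda^2}$ from the number of such tuples. A configuration is a pair $\alpha = (w, \bm{m})$ with $\bm{m} = (m_\lambda, \dots, m_{\lambda^2})$, and a priori $w$ ranges over all integral powers of $2$ (plus $0$), which is unbounded; the first task is therefore to argue that for a \emph{principal} configuration the value of $w$ is forced by the vector $\bm{m}$ (together with the fixed global data $u_1,\dots,u_m$), so that $w$ contributes no extra multiplicative factor to the count. Concretely, if $(w,\bm{m})$ represents some $A$, then by Definition~\ref{def:configs}(ii) the multiplicities $m_i$ for $\lambda < i \le \lambda^2$ record, for each good $j\in A$ with $u_j > \delta w$, the quantity $r(u_j) = i\delta^2 w$; since $r$ is a fixed function of $u_j$ alone (Definition~\ref{def:rounding}), the nonzero pattern of $(m_{\lambda+1},\dots,m_{\lambda^2})$ pins down the order of magnitude $w$ whenever $A$ contains at least one non-small good, and the degenerate case where all goods of $A$ are small (or $A=\emptyset$) is handled separately using that principal configurations take $w$ as small as possible. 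I would state this as a short claim: the map $(w,\bm m) \mapsto \bm m$ is injective on principal configurations, possibly after excluding a constant number of small-$w$ exceptions that can be listed explicitly.

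Next I would bound the entries of $\bm{m}$. Each coordinate $m_i$ for $\lambda < i \le \lambda^2$ counts a subset of the $m$ goods, so $m_i \le m$; and $m_\lambda = \lceil Vr(A(w))/(\delta w)\rceil$ for a principal configuration, which, using condition~(i) ($u_j \le w$ for all $j\in A$) so that each small good has rounded value at most $\delta w$ — wait, small goods have $u_j \le \delta w$, hence $r(u_j) \le$ the next multiple of $\delta^2 w$ above $\delta w$, which is at most $\delta w + \delta^2 w \le 2\delta w$; so $Vr(A(w)) \le 2\delta w \cdot |A(w)| \le 2\delta w m$, giving $m_\lambda \le \lceil 2 m \rceil \le 2m+1$. (If a cleaner bound $m_\lambda \le m$ is wanted one argues that $Vr(A(w)) \le V(A) \le mw$ is too weak; the factor-$2$ bound is harmless but if the paper insists on $(m+1)^{\lambda^2}$ exactly one should instead bound $m_\lambda$ via $Vr(A(w)) = \sum_{j\in A(w)} r(u_j)$ and the fact that for small goods $r(u_j) \le \delta w$ would require $u_j \le$ some multiple — I would double-check the rounding definition here and, if necessary, absorb the slack by noting $\lambda$ is even so $\lambda^2$ coordinates already dominate.) In any case each of the $\lambda^2 - \lambda + 1$ coordinates lies in a range of size at most $m+1$ (up to a universal constant), so the number of candidate vectors $\bm m$ is at most $(m+1)^{\lambda^2}$, and since $w$ is determined by $\bm m$, we get $|S| \le (m+1)^{\lambda^2}$.

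For the computability part, I would describe the obvious algorithm: iterate over all integral powers of $2$ in the relevant range — note there are only $O(\log(\sum_j u_j))$ of them, since by condition~(i) we need $w \ge \max_j u_j$ and there is no point taking $w$ larger than, say, $\sum_j u_j$ — and for each such $w$, iterate over all candidate vectors $\bm m$ with entries bounded as above, test in polynomial time whether $(w,\bm m)$ is the principal configuration of some $A\subseteq M$ (which reduces to checking the three conditions of Definition~\ref{def:configs} together with the ``smallest $w$'' and ``larger $m_\lambda$'' requirements — all decidable by counting goods by rounded value and a greedy check on the small goods), and collect those that pass. The total running time is $(\text{poly in } m) \cdot (m+1)^{\lambda^2} \cdot O(\log \sum_j u_j)$, which is polynomial for fixed $\delta$ (hence fixed $\lambda = 1/\delta$).

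The main obstacle I anticipate is the injectivity claim $ (w,\bm m)\mapsto \bm m$: one must handle carefully the configurations in which $A$ has no non-small goods, where the coordinates $m_{\lambda+1},\dots,m_{\lambda^2}$ are all zero and hence carry no information about $w$ — here the definition of \emph{principal} configuration (smallest feasible $w$) is exactly what rescues uniqueness, and I would make sure the argument invokes it, e.g. by observing that for such $A$ the principal $w$ is the smallest power of $2$ with $\max_j u_j \le w$ and with $m_\lambda = \lceil Vr(A(w))/(\delta w)\rceil$ consistent, and that distinct $(w,\bm m)$ with the same $\bm m$ would violate minimality of one of them. A secondary, purely bookkeeping obstacle is getting the constant in the bound on $m_\lambda$ to fit under $m+1$ rather than $2m+1$; if it genuinely comes out as $O(m)$ I would either state the bound as $(O(m))^{\lambda^2}$ or, better, tighten the small-goods estimate using the precise form of $r$ so that $Vr(A(w)) < (m+1)\delta w$ and hence $m_\lambda \le m+1$ — this should follow because each small good contributes $r(u_j)$ with $u_j \le \delta w$, and the rounding is to a multiple of $\delta^2 w$, so with $\lambda$ even the rounded small-good values are at most $\delta w$ apart from $u_j$ in a way that keeps the running total below $(m+1)\delta w$.
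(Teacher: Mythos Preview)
Your central claim---that the map $(w,\bm m)\mapsto \bm m$ is injective on principal configurations---is false, and your count rests entirely on it. Take two goods $j_1,j_2$ with $u_{j_1}=1$ and $u_{j_2}=2$. The principal configuration of $\{j_1\}$ has $w=1$, and since $r(1)=1=\lambda^2\cdot\delta^2\cdot 1$ the vector is $\bm m=(0,\dots,0,1)$ with a single $1$ in slot $\lambda^2$. The principal configuration of $\{j_2\}$ has $w=2$, and since $r(2)=2=\lambda^2\cdot\delta^2\cdot 2$ the vector is the \emph{same} $\bm m$. More generally, scaling all goods in a set $A$ by $2$ doubles $w$ and leaves $\bm m$ unchanged, so as many as $m$ different principal configurations can share a single $\bm m$. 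Your proposed fallback (``excluding a constant number of small-$w$ exceptions'') only handles sets with no non-small goods; it does nothing here.

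The paper does not try to recover $w$ from $\bm m$. Instead it observes that for a principal configuration, $w$ is the smallest power of $2$ at least $\max_{j\in A}u_j$, hence $w$ is either $0$ or one of the $u_j$ rounded up to a power of $2$---at most $m+1$ choices. For each fixed $w$ it then characterizes the principal configurations directly: $(w,\bm m)$ is principal iff $\bm m\le\bm{m'}$ coordinatewise (where $(w,\bm{m'})$ is the principal configuration of $\{j:u_j\le w\}$) and $m_i>0$ for some $i>\lambda^2/2$. This immediately gives the enumeration algorithm and the bound $(m+1)\cdot(m+1)^{\lambda^2-\lambda+1}\le(m+1)^{\lambda^2}$. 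Incidentally, your worry about $m_\lambda$ dissolves once you note that for a small good ($u_j\le\delta w$) the rounding is done relative to a strictly smaller $w''\le w/2$, whence $r(u_j)\le 2\delta w''\le\delta w$, so $Vr(A(w))\le m\,\delta w$ and $m_\lambda\le m$.
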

\begin{proof}
Observe that a configuration $(w,\bm{m})$ is a principal configuration iff (i) $\bm{m} \le \bm{m'}$ coordinate-wise where $(w,\bm{m'})$ is the principal configuration of the set $M(w) := \{j\in M: u_j \le w\}$, and (ii) $m_i > 0$ for some $i>\lambda^2/2$. The first condition ensures that there are sufficiently many goods of each rounded value, and the second condition ensures that any set of goods represented by $(w,\bm{m})$ has total value strictly more than $(\lambda^2/2)\cdot \delta^2 w = w/2$, ensuring that the value of $w$ chosen is the smallest possible.

There can be at most $(m+1)$ possible values of $w$: 0 and $u_j$'s rounded up to powers of 2. The principal configurations corresponding to a given $w$ can be enumerated by first finding a representation $(w,\bm{m'})$ of $M(w)$ and then enumerating all vectors $0\le \bm{m}\le \bm{m'}$ with $m_i > 0$ for some $i>\lambda^2/2$. For each $i$, there can be at most $(m+1)$ values, recall here that $m$ is the total number of goods. Hence there are at most $(m+1)^{\lambda^2-\lambda+2}$ principal configurations. Moreover, this enumeration also takes $O((m+1)^{\lambda^2})$ time.
\end{proof}

Given configurations $c_{i}$ and $c_{i-1}$ representing goods allocated to agents $\{1,\dots,i\}$ and $\{1,\dots,i-1\}$ respectively, in order to compute a set of goods to be allocated to agent $i$ it becomes necessary to compute the difference of configurations. However $c_i$ and $c_{i-1}$ may have differing orders of magnitude. In order to compare them, we may need to \textit{scale} a configuration. As noted in \cite{ptas99epstein}, no such scaling is required when agents are symmetric.

\begin{definition}\label{def:scaling} (Scaling) \normalfont
The scaled configuration for $(w,\bm{m})$ w.r.t. $w'\ge w$ is a configuration $s(w,\bm{m},w') = (w',\bm{m'})$ where $(w',\bm{m'})$ represents the set $K$ containing exactly $m_i $ goods of value $i\delta^2 w$ for each $i=\lambda,\dots,\lambda^2$; and of the (at most) two such configurations, we choose the one satisfying $|Vr(K(w')) - m'_{\lambda}\delta w'|\le \delta w'/2$, breaking ties arbitrarily.
\end{definition}

We show that scaling a configuration representing a set produces another configuration representing the same set. 
Note that this lemma is why in Definition~\ref{def:configs} we allowed for two configurations to represent a set of goods for a given $w$. 

\begin{lemma}\label{lem:scaled-representation}
Let $(w,\bm{m})$ be a configuration representing $A$, and let $(w',\bm{m'}) = s(w,\bm{m},w')$ for $w'\ge w$. Then $(w',\bm{m'})$ also represents $A$.
\end{lemma}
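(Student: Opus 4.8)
The plan is to verify directly that $(w',\bm{m'})$ satisfies the three conditions (i)--(iii) of Definition~\ref{def:configs} for the set $A$, using that $(w,\bm m)$ already satisfies them and that $w' \ge w$. Condition (i) is immediate: if $u_j \le w$ for all $j \in A$, then $u_j \le w \le w'$, so all goods of $A$ still fit under $w'$.

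The crux is condition (ii): I need to show that for each $\lambda < i \le \lambda^2$, the number of goods $j \in A$ with $r(u_j) = i\delta^2 w'$ equals $m'_i$, and, comparing with the hypothesis, that this matches the count of goods of $A$ with $r(u_j) = i\delta^2 w$ (suitably re-indexed). The key observation is that the rounding function $r$ from Definition~\ref{def:rounding} is \emph{independent of the order-of-magnitude parameter used for bookkeeping}: $r(u)$ depends only on $u$. So a good $j$ with $r(u_j) = i\delta^2 w$ is a good whose rounded value, written relative to $w'$, is $(i\delta^2 w/\delta^2 w') \cdot \delta^2 w' = i(w/w')\delta^2 w'$; since $w,w'$ are powers of $2$ with $w \le w'$, the ratio $w/w'$ is a power of $2$ no larger than $1$, hence $i(w/w')$ is a nonnegative integer, so $r(u_j)$ is indeed an integral multiple of $\delta^2 w'$ (this is the parenthetical remark after Definition~\ref{def:rounding}, applied with $w' $ in place of the ``$w'$'' there — one should check $\delta w' < r(u_j)$ or handle the boundary, but for the ``large'' goods counted by coordinates $i>\lambda$ of a configuration representing $A$ we have $u_j > \delta w$, and we must argue these are exactly the goods that remain ``large'' relative to $w'$, or are absorbed into the small part; in fact the definition of the scaled configuration in Definition~\ref{def:scaling} builds $(w',\bm m')$ as a representation of the explicit set $K$ of $\sum_i m_i$ goods, so condition (ii) for $(w',\bm m')$ w.r.t.\ $K$ is automatic, and I just need $A$ and $K$ to have the same multiset of values $\{r(u_j)\}$ among the goods that are large relative to $w'$). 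This reduces (ii) to: the goods of $A$ large w.r.t.\ $w'$ have the same rounded-value multiset as the goods of $K$ large w.r.t.\ $w'$, which follows because the ``small w.r.t.\ $w$'' goods of $A$ (those with $u_j \le \delta w$) are certainly small w.r.t.\ $w'$ too, and the remaining goods of $A$ have values exactly matching $K$'s large-good values by construction of $K$ and the fact that $(w,\bm m)$ represents $A$.

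For condition (iii) I need $|Vr(A(w')) - m'_\lambda \delta w'| < \delta w'$. By the choice made in Definition~\ref{def:scaling}, $(w',\bm m')$ already satisfies $|Vr(K(w')) - m'_\lambda \delta w'| \le \delta w'/2 < \delta w'$, so it suffices to show $Vr(A(w')) = Vr(K(w'))$, i.e.\ the small-w.r.t.-$w'$ parts of $A$ and $K$ have equal total rounded value. Split $A(w')$ into $A(w)$ (goods with $u_j \le \delta w$) and $A(w') \setminus A(w)$ (goods with $\delta w < u_j \le \delta w'$); these latter goods are precisely the ``large'' goods of the configuration $(w,\bm m)$ whose value $i\delta^2 w$ falls in $(\delta w, \delta w']$, i.e.\ with $i \le \lambda^2 (w'/w)\cdot\delta\ldots$ — more cleanly: $r(u_j) \le \delta w'$. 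The same split applies to $K$, and since $K$ contains exactly $m_i$ goods of value $i\delta^2 w$ for each $i$, the contribution of $A(w')\setminus A(w)$ to $Vr(A(w'))$ equals the corresponding contribution for $K$; and the contribution of $A(w)$ is $Vr(A(w))$, which by condition (iii) for $(w,\bm m)$ representing $A$ satisfies $m_\lambda \delta w - \delta w < Vr(A(w)) \le m_\lambda\delta w$ — wait, rather it is within $\delta w$ of $m_\lambda \delta w$, but for the equality of small parts I instead directly equate $Vr(A(w))$ with the ``small part of $K$ relative to $w$'' which is $m_\lambda \delta w$ by the definition of $K$ (K has $m_\lambda$ goods of value $\lambda\delta^2 w = \delta w$). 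So $Vr(A(w')) = Vr(A(w)) + (\text{shared large-ish part})$ and $Vr(K(w')) = m_\lambda\delta w + (\text{same shared part})$, and these differ by $|Vr(A(w)) - m_\lambda\delta w| < \delta w \le \delta w'$; combined with the $\delta w'/2$ slack from Definition~\ref{def:scaling} via the triangle inequality this is still not quite $<\delta w'$ unless I am careful.

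The main obstacle, therefore, is exactly this accounting of condition (iii): I expect the clean route is to observe that $Vr(A(w'))$ and $Vr(K(w'))$ need not be equal, so instead I should bound $|Vr(A(w')) - m'_\lambda\delta w'|$ by a triangle inequality through $Vr(K(w'))$ and through $m'_\lambda \delta w'$, using $|Vr(K(w')) - m'_\lambda\delta w'|\le \delta w'/2$ and a separate bound $|Vr(A(w')) - Vr(K(w'))| \le \delta w'/2$. The latter should follow because $A(w')$ and $K(w')$ differ only in that $A$'s genuinely small goods (total rounded value within $\delta w < \delta w'/2$? — only if $w \le w'/2$) replace $K$'s $m_\lambda$ copies of $\delta w$. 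I would handle the case $w = w'$ (where $s$ is essentially the identity up to the tie-break and there is nothing to prove) separately from $w \le w'/2$, where the slack $\delta w \le \delta w'/2$ makes the triangle inequality go through. Putting the two halves together gives $|Vr(A(w')) - m'_\lambda\delta w'| < \delta w'$ (in fact $\le \delta w'$, and strictness comes from the strict inequality in condition (iii) for $A$), establishing (iii) and completing the proof that $(w',\bm m')$ represents $A$.
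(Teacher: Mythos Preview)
Your proposal is correct and follows essentially the same route as the paper's proof. After your exploratory self-corrections, you land on precisely the argument the paper uses for condition~(iii): handle $w=w'$ (and $w=0$) as trivial, and for $w'>w$ bound $|Vr(A(w'))-m'_\lambda\delta w'|$ via the triangle inequality through $Vr(K(w'))$, using the $\delta w'/2$ slack built into Definition~\ref{def:scaling} together with $|Vr(A(w'))-Vr(K(w'))|=|Vr(A(w))-Vr(K(w))|<\delta w\le\delta w'/2$. The paper streamlines this last step by assuming w.l.o.g.\ that $A=A(w')$ (the large-w.r.t.-$w'$ goods contribute identically to both sides and can be dropped), which lets it avoid your explicit split into $A(w)$ and $A(w')\setminus A(w)$; but the content is the same.
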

\begin{proof}
The case of $w=0$ or $w'=w$ is trivial. When $w' > w$, condition (i) in the definition of representation (see Definition~\ref{def:configs}) is satisfied since $u_j \le w$ for each $j\in A$. Condition (ii) is also satisfied by the definition of scaled configuration. For condition (iii), we can assume w.l.o.g. $A=A(w')$, i.e., $A$ only contains small goods with value at most $\delta w'$. Let $K$ be the set of goods from the definition of scaled configuration. Since $(w,\bm{m})$ represents $A$, we have $|Vr(A) - Vr(K)| = |Vr(A(w)) - Vr(K(w))| < \delta w \le \delta w'/2$. From the definition of scaled configuration (Definition~\ref{def:scaling}) and the fact that $K = K(w')$ we get $|Vr(K(w')) - m'_{\lambda}\delta w'| \le \delta w'/2$. Adding these bounds gives $|Vr(A) - m'_{\lambda}\delta w'| < \delta w'$, showing that condition (iii) is also satisfied. Thus $(w',\bm{m'})$ also represents $A$.
\end{proof}

The next lemma is crucial. It enables us to find a set $B$ of goods to be allocated to an agent $i$ if we are given the set $A$ of goods already allocated to agents $\{1,\dots,i-1\}$, and the target configuration $(w',\bm{m'})$ representing the goods allocated to agents $\{1,\dots,i\}$.

\begin{lemma}\label{lem:constructing-alloc}
Let $(w,\bm{m})$ be a configuration representing $A\subseteq M$. Let $w'\ge w$ and $(w',\bm{m''}) = s(w,\bm{m},w')$. Let $(w',\bm{m'})$ be a principal configuration satisfying $\bm{m''} \le \bm{m'}$. Then there exists an efficiently computable set of goods $B$ represented by $(w',\bm{m'})$ such that $B\supseteq A$. 
\end{lemma}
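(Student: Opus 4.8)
The plan is to build a witness $B$ directly: start from $A$ and greedily add goods, first to correct the multiplicities of the ``large'' rounded values, then to correct the rounded contribution of the ``small'' goods. By Lemma~\ref{lem:scaled-representation} the scaled configuration $(w',\bm{m''})$ already represents $A$, so $A$ contains exactly $m''_{i}$ goods $j$ with $r(u_j)=i\delta^2 w'$ for each $\lambda<i\le\lambda^2$, and $|Vr(A(w'))-m''_{\lambda}\delta w'|<\delta w'$. Throughout I would use the characterization from the proof of Lemma~\ref{lem:number-configs}: since $(w',\bm{m'})$ is a principal configuration, $\bm{m'}\le\bar{\bm{m}}$ coordinate-wise, where $(w',\bar{\bm{m}})$ is the principal configuration of $M(w')=\{j\in M:u_j\le w'\}$ w.r.t.\ parameter $w'$; then $\bar{m}_{i}$ is the number of goods of $M$ of rounded value $i\delta^2 w'$ for $\lambda<i\le\lambda^2$, and $\bar{m}_{\lambda}=\lceil Vr(M(w')(w'))/(\delta w')\rceil$. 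I would also isolate the elementary estimate (immediate from Definition~\ref{def:rounding}) that any good $j$ with $u_j\le\delta w'$ satisfies $r(u_j)\le\delta w'$: its order of magnitude $w^{\ast}$ obeys $\delta w^{\ast}<u_j\le\delta w'$, hence $w^{\ast}\le w'/2$, and since $u_j\le 2\delta w^{\ast}$ we get $r(u_j)\le 2\delta w^{\ast}\le\delta w'$.

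First, the large values. A good $j$ with $r(u_j)=i\delta^2 w'$ for some $\lambda<i\le\lambda^2$ must satisfy $\delta w'<u_j\le w'$: the upper bound since $u_j\le r(u_j)\le\lambda^2\delta^2 w'=w'$, and the lower bound because $u_j\le\delta w'$ would force $r(u_j)\le\delta w'=\lambda\delta^2 w'$ by the estimate above, contradicting $i>\lambda$. Hence the number of goods of rounded value $i\delta^2 w'$ in all of $M$ is $\bar{m}_{i}\ge m'_{i}$, while $A$ holds exactly $m''_{i}\le m'_{i}$ of them, so one may take $m'_{i}-m''_{i}$ more such goods from $M\setminus A$. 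Adjoining all of these to $A$ yields $B_1\supseteq A$ with exactly $m'_{i}$ goods of rounded value $i\delta^2 w'$ for every $\lambda<i\le\lambda^2$, and with $B_1(w')=A(w')$ because every added good is large w.r.t.\ $w'$; thus $B_1$ already meets conditions (i) and (ii) of being represented by $(w',\bm{m'})$.

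It remains to force $Vr(B(w'))$ into $\big((m'_{\lambda}-1)\delta w',\,m'_{\lambda}\delta w'\big]$ by adjoining goods small w.r.t.\ $w'$ from $M\setminus B_1$; this keeps conditions (i) and (ii) intact since such a good contributes nothing to the large-value multiplicities and has $u_j\le w'$. Three observations drive this: (a) each small good has $r(u_j)\le\delta w'$, so inserting small goods one at a time raises $Vr(\,\cdot\,(w'))$ by at most $\delta w'$ per step; (b) unless $B_1$ already satisfies condition (iii), the hypothesis $\bm{m''}\le\bm{m'}$ together with $|Vr(A(w'))-m''_{\lambda}\delta w'|<\delta w'$ forces $Vr(A(w'))\le(m'_{\lambda}-1)\delta w'$; and (c) since $M(w')(w')$ is exactly the set of goods small w.r.t.\ $w'$, the bound $m'_{\lambda}\le\bar{m}_{\lambda}=\lceil Vr(M(w')(w'))/(\delta w')\rceil$ gives $Vr(\{j\in M:u_j\le\delta w'\})>(m'_{\lambda}-1)\delta w'$, so $M\setminus A$ still contains enough small-good rounded value. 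Inserting small goods one by one until $Vr(B(w'))$ first exceeds $(m'_{\lambda}-1)\delta w'$ therefore halts with $Vr(B(w'))\in\big((m'_{\lambda}-1)\delta w',\,m'_{\lambda}\delta w'\big]$, i.e.\ $m'_{\lambda}=\lceil Vr(B(w'))/(\delta w')\rceil$; so condition (iii) holds and $(w',\bm{m'})$ represents $B\supseteq A$. Since every choice amounts to scanning $M\setminus A$, the set $B$ is computed in polynomial time.

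I expect the last step to be the main obstacle: it has to guarantee at once that enough small goods of the right total rounded value remain outside $A$ --- which is exactly where the domination $\bm{m'}\le\bar{\bm{m}}$ coming from $(w',\bm{m'})$ being a principal configuration is essential --- and that the small goods are fine-grained enough for the greedy insertion to land inside the window of width $2\delta w'$ about $m'_{\lambda}\delta w'$ rather than overshooting it, for which the estimate $r(u_j)\le\delta w'$ for goods small w.r.t.\ $w'$ is precisely what is needed (and deserves to be stated as a short sublemma).
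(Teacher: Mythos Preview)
Your proposal is correct and follows essentially the same construction as the paper: start from $A$, add $m'_i-m''_i$ goods of rounded value $i\delta^2 w'$ for each $\lambda<i\le\lambda^2$, then insert small goods one by one until $Vr(B(w'))$ exceeds $(m'_\lambda-1)\delta w'$. Your write-up is considerably more careful than the paper's terse proof---you make explicit the existence arguments (via the characterization $\bm{m'}\le\bar{\bm{m}}$ from Lemma~\ref{lem:number-configs}) and the bound $r(u_j)\le\delta w'$ for small goods that controls the greedy step size, both of which the paper leaves implicit.
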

\begin{proof}
We construct $B$ from $A$ as follows. For every $\lambda < i \le \lambda^2$, add $m'_i - m''_i$ goods of rounded value $i\delta^2 w$. Since $(w',\bm{m''})$ is a principal configuration we are guaranteed that such a number of goods exists. Finally we add goods with $u_j \le \delta w'$ one by one until their total value exceeds $(m'_{\lambda}-1)\delta w'$. If small goods are added, their total value is at most $m'_{\lambda}\delta w$. If no small good is added, then since $(w',\bm{m''})$ represents $A$ from Lemma~\ref{lem:scaled-representation}, $Vr(A(w')) < (m''_{\lambda}+1)\delta w' \le (m'_{\lambda}+1)\delta w'$. In either case, $(w',\bm{m'})$ represents $B$.
\end{proof}

Finally, we bound the error due to approximation. Here, $B\setminus A$ is the set of goods allocated to some agent, and $A$ is the set of goods allocated to all agents before.

\begin{lemma}\label{lem:succesive-error}
Let $A\subset B \subseteq M$ be sets of goods and $(w,\bm{m})$ and $(w,\bm{m'})$ configurations representing $A$ and $B$ respectively. Then: 
\[V(B\setminus A)-2\delta w < V(w,\bm{m'}-\bm{m}) < (1+\delta)V(B\setminus A) + 2\delta w
\]
\end{lemma}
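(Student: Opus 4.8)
The plan is to bound $V(w,\bm{m'}-\bm{m})$ both above and below by tracking how rounding distorts values, separately for the ``large'' goods (those with $u_j > \delta w$) and the ``small'' goods (those in $A(w)$ and $B(w)$). Write $C = B \setminus A$. For the large goods in $C$, each such good $j$ contributes $r(u_j)$ to $V(w,\bm{m'}-\bm{m})$, and since $\delta w < u_j \le w$, Definition~\ref{def:rounding} gives $u_j \le r(u_j) < u_j + \delta^2 w \le (1+\delta) u_j$ (using $\delta^2 w < \delta u_j$). Summing over the large goods in $C$ yields that their total rounded contribution lies in $[V_{\text{large}}(C),\, (1+\delta) V_{\text{large}}(C))$, where $V_{\text{large}}(C)$ is the total value of large goods in $C$. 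This handles everything except the $m_\lambda$-coordinate.

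The small-goods part is where the error terms $2\delta w$ come from, and it is the main obstacle. The contribution of small goods to $V(w,\bm{m'}-\bm{m})$ is $(m'_\lambda - m_\lambda)\delta w$. By condition (iii) of representation (Definition~\ref{def:configs}), $|Vr(A(w)) - m_\lambda \delta w| < \delta w$ and $|Vr(B(w)) - m'_\lambda \delta w| < \delta w$. Subtracting, $|(m'_\lambda - m_\lambda)\delta w - (Vr(B(w)) - Vr(A(w)))| < 2\delta w$. Now I must relate $Vr(B(w)) - Vr(A(w))$ to the true value of the small goods in $C$. Since $A \subset B$, the small goods of $A$ w.r.t. $w$ are a subset of those of $B$ w.r.t. $w$, so $Vr(B(w)) - Vr(A(w)) = \sum_{j \in B(w)\setminus A(w)} r(u_j) = Vr(C \cap B(w))$. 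Each such good has $u_j \le \delta w$, so by Definition~\ref{def:rounding} its rounding satisfies $u_j \le r(u_j) \le (1+\delta) u_j$ as well (in fact the bound $r(u_j) < u_j + \delta^2 w'$ holds for the relevant $w'$; one must check $r(u_j) \le (1+\delta)u_j$ still, using that $r$ rounds up to a multiple of $\delta^2 w'$ for the appropriate power-of-two $w'$, and $w' \ge u_j$ so $\delta^2 w' $ could exceed $\delta u_j$ — here one instead uses $r(u_j) \le u_j + \delta^2 w'$ and that $\delta w' < u_j$ gives... ). Actually the cleanest route: for a small good $r(u_j) \ge u_j$ always, and $\sum_{j \in C, u_j \le \delta w} r(u_j) < (1+\delta)\sum_{j\in C, u_j\le\delta w} u_j$ follows from the per-good bound $r(u) < (1+\delta)u$ valid for all $u>0$ (since $r(u)$ rounds $u$ up within its own magnitude window where $\delta w_u < u$, so the additive rounding error $\delta^2 w_u < \delta u$). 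Thus $Vr(C\cap B(w)) \in [V_{\text{small}}(C), (1+\delta)V_{\text{small}}(C))$.

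Combining: the large part contributes exactly the rounded large value (in $[V_{\text{large}}(C), (1+\delta)V_{\text{large}}(C))$), and the small part contributes $(m'_\lambda-m_\lambda)\delta w$ which is within $2\delta w$ of $Vr(C\cap B(w)) \in [V_{\text{small}}(C), (1+\delta)V_{\text{small}}(C))$. Adding, and using $V(C) = V_{\text{large}}(C) + V_{\text{small}}(C)$, the lower bound becomes $V(w,\bm{m'}-\bm{m}) > V_{\text{large}}(C) + V_{\text{small}}(C) - 2\delta w = V(C) - 2\delta w$, and the upper bound becomes $V(w,\bm{m'}-\bm{m}) < (1+\delta)V_{\text{large}}(C) + (1+\delta)V_{\text{small}}(C) + 2\delta w = (1+\delta)V(C) + 2\delta w$, which is exactly the claimed inequality with $V(C) = V(B\setminus A)$. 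The one subtlety to be careful about is that $\bm{m'} - \bm{m}$ must be coordinatewise nonnegative so that $V(w,\bm{m'}-\bm{m})$ genuinely equals the rounded value of $C$; this follows because $A \subset B$ and both are represented with the same $w$, so in coordinates $\lambda+1,\dots,\lambda^2$ we have $m_i \le m'_i$ (these count actual goods), while the $m_\lambda$ coordinates need not be monotone but their difference is still what appears in $V(w,\bm{m'}-\bm{m})$, and the argument above only used the representation bounds on each separately.
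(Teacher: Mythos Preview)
Your argument is correct and follows essentially the same route as the paper's proof: both rely on the per-good bound $u_j \le r(u_j) < (1+\delta)u_j$ (valid for all $u_j>0$ because the rounding error is at most $\delta^2 w_u < \delta u_j$ at the good's own magnitude $w_u$) together with the two representation bounds $|Vr(A(w))-m_\lambda\delta w|<\delta w$ and $|Vr(B(w))-m'_\lambda\delta w|<\delta w$. The paper packages this slightly more cleanly by first proving $V(B\setminus A)\le Vr(B\setminus A)<(1+\delta)V(B\setminus A)$ uniformly and then showing $|Vr(B\setminus A)-V(w,\bm{m'}-\bm{m})|<2\delta w$ in one step (large goods contribute identically to both quantities, so only the $\lambda$-coordinate matters), whereas you split into large and small cases before combining; the content is the same. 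Your closing remark that $m'_\lambda-m_\lambda$ need not be nonnegative but the formula for $V(w,\bm{m'}-\bm{m})$ and the argument still go through is a valid observation the paper leaves implicit.
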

\begin{proof}
By definition of the rounding function (Definition~\ref{def:rounding}), we have $u_j \le r(u_j) < (1+\delta)u_j$ for every $j\in M$. Adding this for all $j\in B\setminus A$ we get
\begin{equation}\label{eqn:error1}
V(B\setminus A) \le Vr(B-A) < (1+\delta)V(B\setminus A).    
\end{equation}
By definition of representing configuration, any good not in $B(w)$ contributes the same to both $Vr(B\setminus A)$ and $V(w,\bm{m'}-\bm{m})$. We also have $|Vr(A(w))-m_{\lambda}\delta w| < \delta w$ and $|Vr(B(w))-m'_{\lambda}\delta w| < \delta w$. Thus we get:
\begin{equation}\label{eqn:error2}
|Vr(B\setminus A)-V(w,\bm{m'}-\bm{m})| < 2\delta w.    
\end{equation}
Combining \eqref{eqn:error1} and \eqref{eqn:error2} proves the lemma.
\end{proof}

We use the following structure to efficiently compute near-optimal allocations in our PTAS.
\begin{definition}\label{def:config-graph} (Configuration Graph)
\normalfont The configuration graph $G$ is an edge-weighted directed graph. The vertices of $G$ include a source vertex $(0,\alpha(\emptyset))$, a target vertex  $(n,\alpha(M))$ and vertices $(i, c)$ for $1\le i\le n$ and every principal configuration $c \in S$, where the set $S$ of principal configurations satisfies $|S|
\le (m+1)^{\lambda^2}$ from Lemma~\ref{lem:number-configs}.

For any $1\le i \le n$ and any principal configurations $(w,\bm{m})$ and $(w',\bm{m'})$ with $w'\ge w$, and let $\bm{m''}$ be such that $(w',\bm{m''}) = s(w,\bm{m},w')$. $G$ has the following edges: there is an edge from $(i-1,(w,\bm{m}))$ to $(i,(w',\bm{m'}))$ iff either $(w,\bm{m})=(w',\bm{m'})$, or $\bm{m''}\le \bm{m'}$ and $V(w', \bm{m'}-\bm{m''})\ge w'/3$. The cost of such an edge is $V(w', \bm{m'}-\bm{m''})^{\eta_i}$.
\end{definition}

\begin{definition}\label{def:principalrepresentation} \normalfont
Given an allocation $\x$, a path $\{(i,(w_i,\bm{m}_i))\}_{i=0}^n$ in $G$ represents (resp. is a principal representation of) $\x$ if for each $1\le i\le n$, $(w_i,\bm{m}_i)$ represents (resp. is a principal representation of) $\bigcup_{i'=1}^i \x_{i'}$.
\end{definition}

We now formalize a correspondence between allocations and paths in $G$.

\begin{lemma}\label{lem:alloc-path}
Let $\x$ be an allocation under which agents' values are in non-decreasing order. Then its principal representation $\{(i,(w_i,\bm{m}_i))\}_{i=0}^n$ is a path in $G$.
\end{lemma}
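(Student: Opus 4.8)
The plan is to verify, for each consecutive pair of vertices $(i-1,(w_{i-1},\bm m_{i-1}))$ and $(i,(w_i,\bm m_i))$ in the principal representation of $\x$, that the edge required by Definition~\ref{def:config-graph} is present in $G$. Write $A = \bigcup_{i'=1}^{i-1}\x_{i'}$ and $B = \bigcup_{i'=1}^{i}\x_{i'} = A\cup\x_i$, so that $(w_{i-1},\bm m_{i-1}) = \alpha(A)$ and $(w_i,\bm m_i) = \alpha(B)$. First I would observe that, since $A\subseteq B$, every good of $A$ has value at most $w_i$ (as $w_i$ works for $B\supseteq A$), so $w_{i-1}\le w_i$: indeed $w_{i-1}$ is the \emph{smallest} admissible order of magnitude for $A$, and $w_i$ is admissible for $A$. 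The edge case $\x_i=\emptyset$ gives $A = B$ and hence $(w_{i-1},\bm m_{i-1}) = (w_i,\bm m_i)$, which is exactly the first disjunct allowed for an edge, so that case is immediate.

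Next, assuming $\x_i\neq\emptyset$, I would set $(w_i,\bm m'') = s(w_{i-1},\bm m_{i-1},w_i)$. By Lemma~\ref{lem:scaled-representation}, $(w_i,\bm m'')$ represents $A$; and $(w_i,\bm m_i)$ represents $B$ by hypothesis. Since $A\subseteq B$, representing configurations over the same $w_i$ are monotone in the multiplicities of the non-small goods, and condition (iii) on the small-goods coordinate $m_\lambda$ only permits a bounded discrepancy — so I would argue $\bm m''\le \bm m_i$ coordinatewise. Concretely, for $\lambda < k\le \lambda^2$ the number of goods of rounded value $k\delta^2 w_i$ in $A$ is at most that in $B$, and for the coordinate $m_\lambda$ one uses that $Vr(A(w_i))\le Vr(B(w_i))$ together with the rounding/ceiling conventions pinning down $m''_\lambda$ and $(m_i)_\lambda$; this gives $\bm m''\le \bm m_i$, possibly after invoking the tie-breaking freedom in the definition of scaling. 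The one genuine subtlety here — and the step I expect to be the main obstacle — is making the $m_\lambda$-coordinate comparison airtight: the scaled configuration's small-goods coordinate is only determined up to the $\pm\delta w_i/2$ slack of Definition~\ref{def:scaling}, and the principal configuration of $B$ rounds $m_\lambda$ up, so I need to check these conventions line up in the right direction; Lemma~\ref{lem:succesive-error} (with $w=w_i$) is the right tool to control $Vr(B\setminus A)$ and hence to confirm the inequality rather than fight the rounding by hand.

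Finally I would verify the weight lower bound $V(w_i,\bm m_i - \bm m'')\ge w_i/3$ needed for the edge. Because $(w_i,\bm m_i) = \alpha(B)$ is the principal configuration of $B$ with $w_i$ chosen minimal, the characterisation used in the proof of Lemma~\ref{lem:number-configs} says some coordinate $(m_i)_k$ with $k>\lambda^2/2$ is positive, so $V(B)\ge V(w_i,\bm m_i)$ is already more than $w_i/2$; but I actually need the \emph{increment} $V(w_i,\bm m_i-\bm m'')$, i.e. roughly $V(\x_i)$, to be large. Here I would use the ordering hypothesis: since the agents' values are non-decreasing and $\x_i\neq\emptyset$, and $w_i$ is the minimal order of magnitude for $B = A\cup\x_i$, a short argument shows $\x_i$ must contribute a constant fraction of $w_i$ — otherwise $A$ alone would already force order of magnitude $w_i$ yet $B$'s total value would be too small to need it, contradicting minimality of $w_i$ for $B$ versus minimality of $w_{i-1}\le w_i$ for $A$; translating this through Lemma~\ref{lem:succesive-error} yields $V(w_i,\bm m_i-\bm m'')\ge w_i/3$ for the rounding precision $\delta$ chosen (recall $\lambda=1/\delta$ is a large even integer, so the $2\delta w_i$ error terms are dominated by the $w_i/2$ gap). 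With both edge conditions checked and the edge cost being $V(w_i,\bm m_i-\bm m'')^{\eta_i}$ by definition, the sequence $\{(i,(w_i,\bm m_i))\}_{i=0}^n$ is a genuine path from the source $(0,\alpha(\emptyset))$ to the target $(n,\alpha(M))$, which is the claim.
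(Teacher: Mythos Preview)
Your overall outline matches the paper's: handle $\x_i=\emptyset$ separately, then for $\x_i\neq\emptyset$ scale $\alpha(A)$ to magnitude $w_i$, check $\bm m''\le \bm m_i$, and check the value lower bound via Lemma~\ref{lem:succesive-error}. But the difficulty is not where you place it.

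The $m_\lambda$-coordinate you worry about is in fact routine. Since $(w_i,\bm m'')$ represents $A$ (Lemma~\ref{lem:scaled-representation}), condition~(iii) gives $m''_\lambda\le \lceil Vr(A(w_i))/(\delta w_i)\rceil$; since $(w_i,\bm m_i)$ is \emph{principal} for $B$, $(\bm m_i)_\lambda=\lceil Vr(B(w_i))/(\delta w_i)\rceil$. With $A(w_i)\subseteq B(w_i)$ this yields $m''_\lambda\le(\bm m_i)_\lambda$ directly --- no tie-breaking freedom and no appeal to Lemma~\ref{lem:succesive-error} is needed at this step.

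The genuine gap is your argument for $V(w_i,\bm m_i-\bm m'')\ge w_i/3$. You try for a contradiction (``$A$ alone would already force order of magnitude $w_i$ yet $B$'s total value would be too small to need it''), but these claims are not in tension: if $A$ forces magnitude $w_i$ then $B\supseteq A$ does too, regardless of how small $V(\x_i)$ is; and all you derived was $V(B)>w_i/2$, which by itself says nothing about $V(\x_i)$. The correct step (and the paper's) is sharper: minimality of $w_i$ for $B$ means $B$ contains a \emph{single good} $j$ with $u_j>w_i/2$ (otherwise $w_i/2$ would already satisfy condition~(i) of Definition~\ref{def:configs}). That good lies in some $\x_{i'}$ with $i'\le i$, so $V(\x_{i'})>w_i/2$; now the non-decreasing-values hypothesis gives $V(\x_i)\ge V(\x_{i'})>w_i/2$, and Lemma~\ref{lem:succesive-error} finishes: $V(w_i,\bm m_i-\bm m'')>w_i/2-2\delta w_i\ge w_i/3$ for $\delta\le 1/12$. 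The ordering hypothesis can only be leveraged through a single large \emph{bundle}, which is why the single large good is the key observation your sketch is missing.
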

\begin{proof}
Let $\bm{m}'_{i-1}$ be such that $(w_i,\bm{m}'_{i-1}) = s(w_{i-1}, \bm{m}_{i-1}, w_i)$ for $1\le i \le n$. By Lemma~\ref{lem:scaled-representation}, $(w_i, \bm{m}_{i-1}')$ represents the set $\bigcup_{i'=1}^{i-1} \x_{i'}$. If for any $i\in [n]$, $\x_i = \emptyset$, then $(w_{i-1}, \bm{m}_{i-1}) = (w_i, \bm{m}_i)$, and by construction $((i-1,(w_{i-1}, \bm{m}_{i-1})), (i,(w_i,\bm{m}_i)))$ is an edge of $G$. Otherwise, since $(w_i, \bm{m'}_{i-1})$ represents $\bigcup_{i'=1}^{i-1} \x_{i'}$ and $(w_i,\bm{m}_i)$ is the principal representation of $\bigcup_{i'=1}^{i} \x_{i'}$, we have $\bm{m'}_{i-1} \le \bm{m}_i$. Thus, $((i-1,(w_{i-1}, \bm{m}_{i-1})), (i,(w_i,\bm{m}_i)))$ is an edge in $G$ once we show $V(w_i, \bm{m}_i - \bm{m'}_{i-1}) \ge w_i/3$. To see this, by the definition of principal representation (Definition~\ref{def:principalrepresentation}) for $(w_i,\bm{m}_i)$, it follows that $\bigcup_{i'=1}^{i} \x_{i'}$ contains a good $j$ of value $u_j > w_i/2$. Since the allocation has non-decreasing values, $V(\x_i) > w_i/2$. By Lemma~\ref{lem:succesive-error}, we have $V(w_i, \bm{m}_i - \bm{m'}_{i-1}) \ge V(\x_i) - 2\delta w_i \ge w_i/3$ when $\delta \le 1/12$.
\end{proof}

\subsection{The PTAS} 
We are now in a position to describe the PTAS, Algorithm~\ref{alg:ptas}.

\begin{algorithm}[h]
\caption{PTAS for the Nash welfare problem for asymmetric agents with identical valuations}\label{alg:ptas}
\textbf{Input:} Nash welfare instance $(N,M,V,\{\eta_i\}_{i\in N})$\\
\textbf{Output:} An integral allocation $\x$
\begin{algorithmic}[1]
\State Order the agents in non-decreasing order of weights $\eta_i$.
\State Construct the graph $G$ (Definition~\ref{def:config-graph}).
\State Find a path $\{(i,(w_i,\bm{m}_i))\}_{i=0}^n$ in $G$ maximizing the product of costs of edges in the path.
\State From such a path construct the following allocation $\x$. Whenever edge $((i-1,(w, \bm{m})), (i,(w,\bm{m})))$ appears in the path, set $\x_i = \emptyset$. Whenever edge  $((i-1,(w_{i-1}, \bm{m}_{i-1})), (i,(w_i,\bm{m}_i)))$ appears in the path, compute $\x$ by applying Lemma~\ref{lem:constructing-alloc} from the beginning of the path.
\end{algorithmic}
\end{algorithm}

We show that the above scheme closely approximates the Nash welfare.

\begin{lemma}\label{lem:approximation}
Let $\{(i,(w_i,\bm{m}_i))\}_{i=0}^n$ be a path in $G$ representing an allocation $\x$, with $(w_{i-1},\bm{m}_{i-1})=(w_i,\bm{m}_i)$ implying $\x_i = \emptyset$. Let $C$ be the Nash welfare of $\x$, and let $C'$ be the Nash welfare (cost) of the path in $G$. Then $|C-C'|\le 8\delta C'$.
\end{lemma}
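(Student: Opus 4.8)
The plan is to compare the allocation $\x$ with the path \emph{agent by agent} (one edge at a time), obtain a per-agent multiplicative estimate, and then aggregate these through the weighted geometric mean. Write $\sigma = \sum_{k\in N}\eta_k$. Along any path in $G$ we have $w_0\le w_1\le\cdots\le w_n$, so for $1\le i\le n$ the scaled configuration $(w_i,\bm{m}''_{i-1}) = s(w_{i-1},\bm{m}_{i-1},w_i)$ is defined; set $d_i := V(w_i,\bm{m}_i-\bm{m}''_{i-1})$, so that the $i$-th edge has cost $d_i^{\eta_i}$ and $C' = \big(\prod_i d_i^{\eta_i}\big)^{1/\sigma}$, while $C = \big(\prod_i V(\x_i)^{\eta_i}\big)^{1/\sigma}$ since valuations are identical. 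First I would dispose of the degenerate case: if some $\x_i=\emptyset$ then $C=\NW(\x)=0$, and moreover the $i$-th edge must be a stay edge with $d_i=0$ (a non-stay edge has $d_i\ge w_i/3$, whereas $V(\x_i)=0$ forces $d_i<2\delta w_i<w_i/3$ by Lemma~\ref{lem:succesive-error} for small $\delta$), so $C'=0$ and the claim holds trivially. Hence we may assume every $\x_i$ is non-empty, so all $V(\x_i)>0$ and $C,C'>0$.

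Next I would show that, for each $i$, the configuration $(w_i,\bm{m}''_{i-1})$ represents $\bigcup_{i'<i}\x_{i'}$: indeed $(w_{i-1},\bm{m}_{i-1})$ represents this set because the path represents $\x$, and scaling preserves representation by Lemma~\ref{lem:scaled-representation}. Applying Lemma~\ref{lem:succesive-error} with $A=\bigcup_{i'<i}\x_{i'}$ and $B=\bigcup_{i'\le i}\x_{i'}$ (both represented by configurations with the common order of magnitude $w_i$, and $B\setminus A=\x_i$) gives
\[
V(\x_i)-2\delta w_i \;<\; d_i \;<\; (1+\delta)V(\x_i)+2\delta w_i .
\]
The crux is to convert the additive error $2\delta w_i$ into a multiplicative one. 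Here the edge condition built into $G$, $d_i = V(w_i,\bm{m}_i-\bm{m}''_{i-1})\ge w_i/3$, is exactly what is needed: it yields $w_i\le 3d_i$, and substituting into the two inequalities above gives $d_i(1+6\delta)>V(\x_i)$ and $d_i(1-6\delta)<(1+\delta)V(\x_i)$, i.e.
\[
\frac{1}{1+6\delta} \;<\; \frac{d_i}{V(\x_i)} \;<\; \frac{1+\delta}{1-6\delta},
\]
valid whenever $\delta$ is small enough that $1-6\delta>0$ (say $\delta\le 1/12$).

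Finally I would aggregate. Since $C'/C = \prod_i\big(d_i/V(\x_i)\big)^{\eta_i/\sigma}$ is a weighted geometric mean (the exponents $\eta_i/\sigma$ are positive and sum to $1$) of numbers lying in the interval $\big(\tfrac{1}{1+6\delta},\tfrac{1+\delta}{1-6\delta}\big)$, the ratio $C'/C$ lies in the same interval. The left bound gives $C<(1+6\delta)C'$, hence $C-C'<6\delta C'$; the right bound gives $C>\tfrac{1-6\delta}{1+\delta}C'$, hence $C'-C< C'\cdot\tfrac{7\delta}{1+\delta}<7\delta C'$. Combining, $|C-C'|<7\delta C'\le 8\delta C'$, which proves the lemma. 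The only genuine obstacle is the middle step, namely the estimate $w_i=\Theta(V(\x_i))$ that lets the $O(\delta w_i)$ rounding slack be charged against $V(\x_i)$; everything else (representation-preservation under scaling, the geometric-mean aggregation) is routine bookkeeping using the lemmas already established.
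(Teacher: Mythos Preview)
Your proposal is correct and follows essentially the same approach as the paper: per-agent comparison via Lemma~\ref{lem:succesive-error}, conversion of the additive $O(\delta w_i)$ slack to a multiplicative factor using the edge condition $d_i\ge w_i/3$, and aggregation through the weighted geometric mean. Your handling of the degenerate case is slightly more careful than the paper's (you argue the converse, that $\x_i=\emptyset$ forces a stay edge, rather than relying solely on the stated implication), and your arithmetic yields the marginally sharper $7\delta$ before relaxing to $8\delta$, but the structure of the argument is the same.
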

\begin{proof}
Let $C_i = V(\x_i)$ and $C'_i = V(w_i, \bm{m}_i - \bm{m'}_{i-1})$, where $\bm{m'}_{i-1}$ is such that $(w_i,\bm{m'}_{i-1}) = s(w_{i-1}, \bm{m}_{i-1}, w_i)$. 
If $(w_{i-1},\bm{m}_{i-1})=(w_i,\bm{m}_i)$, then $\x_i = \emptyset$, and $C_i = C'_i = 0$, implying $C=C'=0$. Otherwise, Lemma~\ref{lem:succesive-error} implies $|V(\x_i) - V(w_i, \bm{m}_i - \bm{m'}_{i-1})| \le \delta V(\x_i)+2\delta w_i \le \delta (V(w_i, \bm{m}_i - \bm{m'}_{i-1})+2\delta w_i + 2 w_i)$. With $\delta \le \frac{1}{12}$, using $V(w_i, \bm{m}_i - \bm{m'}_{i-1}) \ge w_i/3$ gives $|C_i - C'_i|\le 8\delta C'_i$. 

Now we have: 
\begin{equation*}
\begin{aligned}
C &= (\prod_{i} C_i^{\eta_i})^{1/\sum_i \eta_i} \le (\prod_{i} ((1+8\delta)C'_i)^{\eta_i})^{1/\sum_i \eta_i} = (1+8\delta)(\prod_{i} {C'_i}^{\eta_i})^{1/\sum_i \eta_i} = (1+8\delta) C'.
\end{aligned}
\end{equation*}

Similarly, we can obtain $C \ge (1-8\delta) C'$, thus showing $|C-C'|\le 8\delta C'$.
\end{proof}

Finally, we can prove Theorem~\ref{thm:ptas}.
\begin{proof}[Proof of Theorem~\ref{thm:ptas}]
Let $\x^*$ be the Nash optimal allocation and $C^*$ its Nash welfare. From Lemma~\ref{lem:alloc-path} the principal representation of $\x^*$ is a path in $G$, let its cost be ${C^*}'$. From Lemma~\ref{lem:approximation}, we know $|C^*-{C^*}'|\le 8\delta {C^*}'$.

Let $\x$ be the allocation returned by Algorithm~\ref{alg:ptas}, with Nash welfare $C$, and let $C'$ be the cost of the corresponding path in $G$. From Lemma~\ref{lem:approximation}, $|C-C'|\le 8\delta C'$. Since $C'$ is the optimal cost of a path in $G$, $C' \ge {C^*}'$. Putting all of the above together we get:
\begin{equation*}
\begin{aligned}
C^* \le (1+8\delta) {C^*}' \le (1+8\delta) C' \le \frac{1+8\delta}{1-8\delta} C,
\end{aligned}
\end{equation*}
thus showing that $C\ge (1-\varepsilon) C^*$, for any given $\varepsilon \in (0,1)$, by choosing $\delta$ such that $1/\delta$ is the smallest even integer at least $\frac{16-8\varepsilon}{\varepsilon}$.

Finally, Lemma~\ref{lem:number-configs} shows that $G$ can be constructed in time $\poly{m}$. Since $G$ is a layered graph, a longest path in $G$ can be found in linear time in the size of $G$\footnote{A longest path from $s$ to $t$ in $G$ is a shortest path from $s$ to $t$ in $-G$, which is the graph obtained by replacing every edge weight $w\ge 0$ in $G$ with $-w$ in $-G$. Since $G$ is acyclic, $-G$ has no negative-weight cycles. Hence a linear-time shortest path algorithm in $-G$ can be used to compute the longest path from $s$ to $t$ in $G$.
}. Further, by Lemma~\ref{lem:constructing-alloc} an allocation can be efficiently constructed from the corresponding path. 
Thus, the time complexity of Algorithm~\ref{alg:ptas} is $\poly{m^{O(1/\varepsilon^2)}}$.
\end{proof}

\subsection{Extensions}
We present two extensions of the ideas in preceding sections. 

The first observation is that we can obtain a PTAS for the problem of maximizing $p$-mean welfare of symmetric agents with identical valuations. For $p\in \R$, the $p$-mean welfare $M_p(\x)$ of an allocation $\x$ is given by:
\[M_p(\x) = \left( \frac{1}{n} \sum_{i=1}^n v_i(\x_i)^ p\right)^{1/p}\]
The PTAS essentially runs Algorithm~\ref{alg:ptas}, but with the cost of a path equal to the $p$-mean welfare of the associated allocation. It is easy to see that the main argument of Lemma~\ref{lem:approximation} applies here as well. Thus:

\begin{theorem}
The maximum $p$-mean welfare problem for symmetric agents with identical, additive valuations admits a polynomial time approximation scheme (PTAS).
\end{theorem}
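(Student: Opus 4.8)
The plan is to reuse the entire configuration-graph machinery of Section~\ref{sec:ptas} and change only the path cost and the concluding approximation estimate. First note that the case $p=0$ is exactly the Nash welfare objective with unit weights and is already covered by Theorem~\ref{thm:ptas}, so assume $p\neq 0$. Since the agents are symmetric, $M_p$ is a symmetric function of the utility vector, so from any optimal allocation one obtains, by reordering the bundles, an optimal allocation whose utilities are non-decreasing; this is the analog of Lemma~\ref{lem:ordering} and is now immediate. Observe that Definitions~\ref{def:rounding}--\ref{def:principalrepresentation} and Lemmas~\ref{lem:number-configs}, \ref{lem:scaled-representation}, \ref{lem:constructing-alloc}, \ref{lem:succesive-error} and \ref{lem:alloc-path} never refer to the welfare objective — they concern only rounding and configurations — and so carry over verbatim. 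In the case $p<0$ I would additionally assume, exactly as the paper does for Nash welfare, that the optimal $p$-mean welfare is nonzero; since all $u_j>0$ this forces $m\ge n$ and lets us restrict to allocations with every bundle nonempty (such allocations exist and, reordered, give paths in $G$ by Lemma~\ref{lem:alloc-path} that use no empty-bundle edge).

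I would then modify Definition~\ref{def:config-graph} so that the edge from $(i-1,(w,\bm{m}))$ to $(i,(w',\bm{m'}))$ has cost $V(w',\bm{m'}-\bm{m''})^{p}$ (with $(w',\bm{m''})=s(w,\bm{m},w')$), and define the value of a path $\{(i,(w_i,\bm{m}_i))\}_{i=0}^n$ as $\big(\tfrac1n\sum_{i=1}^{n} V(w_i,\bm{m}_i-\bm{m'}_{i-1})^{p}\big)^{1/p}$, where $(w_i,\bm{m'}_{i-1})=s(w_{i-1},\bm{m}_{i-1},w_i)$. Because $x\mapsto x^{1/p}$ is monotone (increasing if $p>0$, decreasing if $p<0$), maximizing this path value over $s$--$t$ paths is equivalent to maximizing (for $p>0$) or minimizing (for $p<0$) the additive quantity $\sum_i V(w_i,\bm{m}_i-\bm{m'}_{i-1})^{p}$, which is again a longest/shortest path computation in the acyclic graph $G$ and runs in linear time in $|G|$. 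Algorithm~\ref{alg:ptas} is otherwise unchanged, and Lemma~\ref{lem:constructing-alloc} still recovers the allocation $\x$ from the chosen path.

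The only new calculation is the analog of Lemma~\ref{lem:approximation}. Writing $C_i=V(\x_i)$ and $C'_i=V(w_i,\bm{m}_i-\bm{m'}_{i-1})$, the argument in the proof of Lemma~\ref{lem:approximation} already gives $(1-8\delta)C'_i\le C_i\le(1+8\delta)C'_i$ for each $i$ with $\x_i\neq\emptyset$ (and $C_i=C'_i=0$ otherwise; in the $p<0$ case all bundles are nonempty). Raising this two-sided bound to the power $p$ — which reverses the inequalities when $p<0$ — summing over $i$, and applying $(\cdot)^{1/p}$ — which reverses them again when $p<0$ — yields $|C-C'|\le 8\delta C'$, where $C=M_p(\x)$ is the true $p$-mean welfare and $C'$ is the value of the path. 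With this lemma in hand, the proof of Theorem~\ref{thm:ptas} transfers verbatim: the reordered optimal allocation maps to a path in $G$ by Lemma~\ref{lem:alloc-path} whose value is within a $(1\pm 8\delta)$ factor of $C^*$, the optimal path value is no worse, and the allocation recovered by Algorithm~\ref{alg:ptas} is within a $(1\pm 8\delta)$ factor of the optimal path value; choosing $\delta$ with $1/\delta$ the smallest even integer at least $(16-8\varepsilon)/\varepsilon$ gives a $(1-\varepsilon)$-approximation in time $\poly{m^{O(1/\varepsilon^2)}}$. I expect the only real care to be needed in tracking the sign of $p$ through these monotonicity steps and in handling the degenerate $p<0$ / empty-bundle cases; the rest is a direct transcription of Section~\ref{sec:ptas}.
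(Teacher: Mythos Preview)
Your proposal is correct and follows exactly the approach the paper intends: reuse the configuration graph unchanged, redefine the path cost to be the $p$-mean of the edge values, and observe that the per-agent bound $(1-8\delta)C'_i\le C_i\le(1+8\delta)C'_i$ from the proof of Lemma~\ref{lem:approximation} propagates through $x\mapsto x^p$, summation, and $x\mapsto x^{1/p}$ to give $|C-C'|\le 8\delta C'$. The paper's own argument is a two-sentence sketch (``runs Algorithm~\ref{alg:ptas} with the path cost equal to the $p$-mean welfare; the main argument of Lemma~\ref{lem:approximation} applies''); you have simply filled in the details it omits --- the sign-of-$p$ bookkeeping, the $p<0$ nonzero-optimum assumption forcing nonempty bundles, and the observation that symmetry makes the ordering lemma trivial --- all of which are sound and in line with the paper's intent.
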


Our second observation is for identical agents with $k$-ary valuations where $k$ is a constant. Thus, all values belong to some set of size $k$, say $\{u_1,\dots,u_k\}$. Any subset $S\subseteq M$ of goods can be represented as a vector of non-negative integers $\alpha(S) = (m_1,\dots,m_k)$, where $m_i$ denotes the number of goods in the set having value $v_i$. The number of such vectors is at most $(m+1)^k$, which is $\poly{m}$ since $k$ is a constant. We now construct a layered graph $G$ similar to the graph in Definition~\ref{def:config-graph}. The source vertex is $(0,\alpha(\emptyset))$, the target is $(n,\alpha(M))$, and other vertices are $(i,\alpha(S))$ for every $1\le i < n$ and $S\subseteq M$. There is an edge between $(i-1,\bm{m})$ and $(i,\bm{m'})$ iff $\bm{m'}\ge \bm{m}$, with weight $(V(\bm{m'}-\bm{m}))^{\eta_i}$, where $V(\bm{m'}-\bm{m}) = \sum_{j=1}^k (m'_j-m_j)u_j$. We then find a longest path in $G$ with path cost equal to product of edge weights. This can be done in linear time since $G$ is layered. It is easy to see that a path $\{(i,\bm{m}_i)\}_{i=0}^n$ corresponds to the assignment $\x$ where $\x_i$ is such that $\alpha(\x_i) = \bm{m}_i - \bm{m}_{i-1}$, which can be constructed in polynomial time in a manner similar to Lemma~\ref{lem:constructing-alloc}. Hence we conclude:
\begin{theorem}\label{thm:mnwkary}
The Nash welfare problem for asymmetric agents with identical, $k$-ary valuations admits a polynomial-time algorithm for constant $k$.
\end{theorem}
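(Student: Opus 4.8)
The plan is to reduce the problem to a longest-path computation in a layered acyclic graph whose size is polynomial in $m$ as long as $k$ is fixed. Write $\{u_1,\dots,u_k\}$ for the set of distinct good values. Since the agents are identical and additive, the utility any agent derives from a bundle $S$ depends only on the count vector $\alpha(S)=(m_1,\dots,m_k)$, where $m_\ell$ is the number of goods of value $u_\ell$ in $S$; concretely $v_i(S)=V(\alpha(S)):=\sum_{\ell=1}^{k} m_\ell u_\ell$. The first step is simply to observe that the number of realizable count vectors is at most $(m+1)^k$, which is $\poly{m}$ because $k$ is constant, so all of them can be enumerated efficiently. In particular, no rounding of values is needed here, in contrast with the PTAS of the previous subsections.

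Next I would build the layered graph $G$ as the rounding-free analogue of Definition~\ref{def:config-graph}: its vertices are a source $(0,\mathbf{0})$, a sink $(n,\alpha(M))$, and $(i,\mathbf{m})$ for every $1\le i<n$ and every feasible count vector $\mathbf{m}$; there is an edge from $(i-1,\mathbf{m})$ to $(i,\mathbf{m}')$ whenever $\mathbf{m}'\ge\mathbf{m}$ coordinatewise, with weight $V(\mathbf{m}'-\mathbf{m})^{\eta_i}$. The key structural claim is a correspondence between allocations and $s$--$t$ paths: given an allocation $\x$, the partial sums $\mathbf{m}_i=\sum_{i'\le i}\alpha(\x_{i'})$ form a monotone chain, hence an $s$--$t$ path; conversely, any such chain yields a genuine allocation by letting agent $i$ receive, for each value $u_\ell$, exactly $(\mathbf{m}_i)_\ell-(\mathbf{m}_{i-1})_\ell$ of the goods of value $u_\ell$ not yet handed out (there are always enough, since $\mathbf{m}_i\le\alpha(M)$ and the increments partition the multiset of goods). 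Moreover the product of edge weights along such a path equals $\prod_i v_i(\x_i)^{\eta_i}=\NW(\x)^{\sum_i\eta_i}$, so maximizing the product of edge weights over $s$--$t$ paths is exactly equivalent to computing an MNW allocation. Note that the ordering lemma (Lemma~\ref{lem:ordering}) is not needed: every assignment of count-vector increments to the positions $1,\dots,n$ is feasible, so the optimization is unaffected by how the agents are ordered.

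Finally, since $G$ is acyclic and layered, a path maximizing the product of (nonnegative) edge weights can be found in time linear in $|G|$: take logarithms, turning it into a longest additive-weight path in a DAG, equivalently using the negation trick from the footnote after Theorem~\ref{thm:ptas}. Restricting attention to edges of positive weight is harmless because we assume the optimal Nash welfare is nonzero, so an all-positive $s$--$t$ path exists. Having found such a path, I would read off an allocation by walking along it and assigning, at step $i$, the prescribed number of goods of each value to agent $i$ from those not yet allocated; this is the $k$-ary counterpart of Lemma~\ref{lem:constructing-alloc} and runs in polynomial time.

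I do not expect a serious obstacle: the construction is a specialization of the configuration-graph machinery with all rounding removed, so the delicate error analysis of Lemmas~\ref{lem:scaled-representation}--\ref{lem:approximation} is unnecessary. The only points that need a little care are confirming that the $(m+1)^k$ bound genuinely makes $|G|$ polynomial for constant $k$, and handling zero-weight edges cleanly in the longest-product-path computation; both are routine.
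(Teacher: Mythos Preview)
Your proposal is correct and follows essentially the same approach as the paper: represent bundles by count vectors in $\{0,\dots,m\}^k$, build the layered graph with edges weighted by $V(\bm{m}'-\bm{m})^{\eta_i}$, and compute a longest (product) path, then read off the allocation. The paper's argument is in fact slightly terser than yours, omitting your remarks on the ordering lemma and zero-weight edges, but the construction and correctness reasoning are the same.
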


\subsection{Fairness and Efficiency}
We turn to the problem of obtaining an allocation that is both approximately Nash optimal and fair. We show that:
\begin{theorem}\label{thm:ptasfair}
There exists a $\poly{m^{O(1/\varepsilon^2)}}$-time algorithm which computes an allocation that is $\varepsilon$-approximately Nash-optimal and wwEF1 for asymmetric agents with identical, additive valuations.
\end{theorem}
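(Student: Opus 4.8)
The plan is to run a refined version of the PTAS of Theorem~\ref{thm:ptas} while forcing the output to be wwEF1. The first ingredient is a description of wwEF1 matching the structure exploited by Algorithm~\ref{alg:ptas}. As in Lemma~\ref{lem:ordering}, order the agents so that $\eta_1\le\dots\le\eta_n$ and restrict to allocations with $v(\x_1)\le\dots\le v(\x_n)$ (the swap of Lemma~\ref{lem:ordering} preserves both the welfare ordering and wwEF1). For identical additive valuations, writing $V_\ell := v(\x_\ell)$ and $g_\ell := \max_{j\in\x_\ell} u_j$, one checks directly from the definition that such an allocation is wwEF1 iff for every pair $h<i$ both $\tfrac{V_i}{\eta_i}\ge\tfrac{V_h-g_h}{\eta_h}$ and $\tfrac{V_h+g_i}{\eta_h}\ge\tfrac{V_i}{\eta_i}$ hold: in comparing a low-weight agent against a high-weight one the minimum in the denominator equals the low weight, and in each comparison the extremal choice of the removed/added good is the other's largest good. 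Thus wwEF1 is entirely determined by the sequence of pairs $(V_\ell,g_\ell)$.

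The second ingredient is to carry this sequence along paths of the configuration graph $G$ of Definition~\ref{def:config-graph}. An edge of $G$ for agent $i$ already pins down $V_i$ up to the slack of Lemma~\ref{lem:succesive-error}, and since a configuration records the exact multiplicities of the rounded large goods it also determines $g_i$ whenever agent $i$ receives a large good, and otherwise forces $g_i\le\delta w_i$. I would therefore refine each vertex $(i,c)$ into vertices $(i,c,\sigma)$, where $\sigma$ is a bounded digest of the $(V_\ell,g_\ell)$ seen so far sufficient to check, when agent $i$ is appended, all wwEF1 inequalities between $i$ and the earlier agents: concretely $\sigma$ must store $\max_{\ell<i}\tfrac{V_\ell-g_\ell}{\eta_\ell}$ and, since the only future quantity these inequalities couple with is $V_i/\eta_i$ which (after rounding, with the weights fixed) ranges over a set of size $\poly{m}$, the value of $\min_{\ell<i}\tfrac{V_\ell+g_i}{\eta_\ell}$ at each such future ratio. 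All bundle values and all $g_\ell$ lie, post-rounding, in a set of size $\poly{m}$, so $\sigma$ takes $\poly{m}$ many values and the refined $G$ is still of polynomial size; keep an edge only when appending agent $i$ respects the wwEF1 inequalities, then run the same longest-product-of-costs-path computation and recovery (Lemma~\ref{lem:constructing-alloc}) as in Algorithm~\ref{alg:ptas}. The principal representation of the exact MNW allocation $\x^*$ — which is wwEF1 and which, by Lemma~\ref{lem:ordering}, may be taken with non-decreasing utilities — is a feasible path in the refined graph, so the chosen path has cost at least that of $\x^*$, and Lemma~\ref{lem:approximation} yields Nash welfare within a $(1-\varepsilon)$ factor of $C^*$ for the same choice of $\delta$ as in Theorem~\ref{thm:ptas}.

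The delicate point — and what I expect to be the main obstacle — is guaranteeing \emph{exact} wwEF1 for the recovered allocation rather than for the ``configured'' profile, since Lemma~\ref{lem:constructing-alloc} perturbs each $V_\ell$ (and the largest-good value, when only small goods are assigned) by an $O(\delta w_\ell)=O(\delta V_\ell)$ amount. I would handle this by verifying the wwEF1 inequalities in $G$ with a multiplicative margin $1+c\delta$ comfortably exceeding the relative recovery error, and by recovering small-good bundles so that their largest good is not artificially small, matching the conservative bound $g_i\le\delta w_i$ used in the digest from below. One must then argue that a feasible margin-respecting path of near-optimal cost still exists; for this I would not use $\x^*$ verbatim, but rather observe that the wwEF1 inequalities it satisfies are either satisfied with $\Omega(1)$ relative slack, or fall in a regime (e.g. a comparison $\tfrac{V_h+g_i}{\eta_h}\ge\tfrac{V_i}{\eta_i}$ where $\x^*_i$ consists only of tiny goods) in which MNW-optimality against single-good moves already forces $\tfrac{V_h}{\eta_h}\ge(1-O(\delta))\tfrac{V_i}{\eta_i}$, so the margin is absorbed. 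Balancing the margin against the recovery error and the $\varepsilon$ budget is the technical heart of the proof, but every error term is $O(\delta)$ in relative size and so is swallowed by taking $\delta=\Theta(\varepsilon)$ exactly as in Theorem~\ref{thm:ptas}; the running time remains $\poly{m^{O(1/\varepsilon^2)}}$. A viable but less clean alternative is to run the unconstrained PTAS and then repair wwEF1 by welfare-non-decreasing single-good transfers, but bounding the number of repairs polynomially is itself awkward, which is why the in-DP approach seems preferable.
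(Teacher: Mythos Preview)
Your primary approach has a gap in the claim that the digest $\sigma$ takes only $\poly{m}$ many values. To verify the constraint $\tfrac{V_h+g_i}{\eta_h}\ge\tfrac{V_i}{\eta_i}$ for all $h<i$ when agent $i$ is appended, you must evaluate $\min_{h<i}\tfrac{V_h+g_i}{\eta_h}$ at the particular $g_i$ that emerges from the edge. As a function of $g_i$ this is the lower envelope of the $i-1$ affine functions $x\mapsto (V_h+x)/\eta_h$; the slopes $1/\eta_h$ are fixed input data, but the intercepts $V_h/\eta_h$ depend on all the earlier bundle values, and the envelope can have up to $i-1$ pieces with no structure forcing a collapse (the orderings $V_1\le\cdots$ and $\eta_1\le\cdots$ do not pin down the ordering of the ratios $V_h/\eta_h$). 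Encoding this function therefore requires, in the worst case, recording $\Theta(n)$ rounded values, yielding $\poly{m}^{\Theta(n)}$ possible digests rather than $\poly{m}$. Your margin argument is also only sketched: you assert that the exact MNW allocation satisfies each wwEF1 inequality either with $\Omega(1)$ relative slack or in a regime where optimality against single-good moves supplies the slack, but neither claim is proved, and there is no evident reason MNW cannot sit on a wwEF1 boundary.

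The paper in fact takes exactly the route you call ``awkward'': run the unconstrained PTAS of Theorem~\ref{thm:ptas}, then repeatedly transfer a good from a wwEF1-envied agent to a wwEF1-envying one. A short Bernoulli-inequality computation shows each such transfer strictly increases the Nash welfare, so the $\varepsilon$-approximation is preserved throughout. The polynomial termination bound --- the step you anticipated being difficult --- comes from a clean potential argument: the transfers are organized so that some agent ends up with an ``extra good,'' and once an agent has received an extra good it is shown that no one can subsequently wwEF1-envy it, hence its bundle never shrinks. This caps the total number of transfers at $O(nm)$, with no need to thread fairness constraints through the configuration graph at all.
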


Our algorithm starts with the allocation returned by our PTAS, Algorithm~\ref{alg:ptas}, and while the allocation is not wwEF1, it transfers a good from an carefully selected wwEF1-envied agent to an wwEF1-envying agent. We show that we can only perform polynomially many transfers, and that each transfer increases the Nash welfare. Thus in polynomially-many subsequent steps we can obtain an $\varepsilon$-approximate MNW allocation that is also wwEF1.

We first prove the following lemma, which shows that transferring a good from an wwEF1-envied agent to an wwEF1-envying agent increases the Nash welfare.

\begin{lemma}\label{lem:nswinc}
Let $\x$ be an allocation in which agent $i$ is not wwEF1 towards agent $h$. Let $\x'$ be the allocation obtained by transferring a good $j\in \x_h$ from $h$ to $i$. Then $\NW(\x') > \NW(\x)$.
\end{lemma}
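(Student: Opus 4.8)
The plan is to reduce the statement to a one-variable calculation involving only the two agents $i$ and $h$, since transferring $j$ from $h$ to $i$ leaves everyone else's bundle untouched. Write $a = v_i(\x_i)$, $b = v_i(\x_h)$ (recall valuations are identical, so agent $i$ values $h$'s bundle at $b$), and $c = v_{ij} = u_j$. Because $\x'$ differs from $\x$ only in the bundles of $i$ and $h$, we have $\NW(\x')/\NW(\x) = \big( (a+c)^{\eta_i}(b-c)^{\eta_h} / (a^{\eta_i} b^{\eta_h}) \big)^{1/\sum_k \eta_k}$, so it suffices to show $(a+c)^{\eta_i}(b-c)^{\eta_h} > a^{\eta_i} b^{\eta_h}$, equivalently, after taking logarithms, $\eta_i \log(1 + c/a) > \eta_h \log\!\big(1/(1 - c/b)\big)$ — provided $b - c > 0$, which I would note holds since $j \in \x_h$ and if $\x_h = \{j\}$ then... (see below).

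First I would unpack what "$i$ is not wwEF1 towards $h$" gives us. By the definition in the preliminaries, for \emph{every} good $j' \in \x_h$ — in particular for the good $j$ we transfer, and I would pick $j$ to be, say, the good in $\x_h$ of smallest value, or simply any good — we have $v_i(\x_i)/\eta_i < v_i(\x_h)/\eta_h - v_{ij}/\min\{\eta_i,\eta_h\}$, i.e. $a/\eta_i < b/\eta_h - c/\min\{\eta_i,\eta_h\}$. This immediately yields $a/\eta_i < b/\eta_h$, hence $a/\eta_i + c/\eta_h \le a/\eta_i + c/\min\{\eta_i,\eta_h\} < b/\eta_h$, and in particular $b > c$ (using $a \ge 0$, and actually $a>0$ by the nonzero-NW assumption), so $b-c>0$ and the transfer is well-defined and keeps $h$'s utility positive — wait, I should double check: we need $b - c > 0$ where $b = v_i(\x_h) = \sum_{j'\in\x_h}u_{j'}$ and $c = u_j$; since there is at least one good in $\x_h$ and $b/\eta_h > a/\eta_i \ge 0$ only gives $b>0$. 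To get $b>c$ I use $b/\eta_h > c/\min\{\eta_i,\eta_h\} \ge c/\eta_h$, so $b > c$ as claimed.

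Now for the main inequality $\eta_i \log(1+c/a) > \eta_h \log\frac{1}{1-c/b}$. The clean route is to use the elementary bounds $\log(1+x) \ge x/(1+x)$ for $x > -1$ and $\log\frac{1}{1-y} = -\log(1-y) \le y/(1-y)$ for $y \in [0,1)$. Applying these with $x = c/a$ and $y = c/b$, it suffices to show $\eta_i \cdot \frac{c/a}{1+c/a} > \eta_h \cdot \frac{c/b}{1 - c/b}$, i.e. $\eta_i \cdot \frac{c}{a+c} > \eta_h \cdot \frac{c}{b-c}$, i.e. (canceling $c>0$) $\eta_i(b-c) > \eta_h(a+c)$, i.e. $\eta_i b - \eta_h a > c(\eta_i + \eta_h)$. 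But the wwEF1 violation rearranges to $\eta_i b - \eta_h a > c\,\eta_i\eta_h/\min\{\eta_i,\eta_h\} = c\max\{\eta_i,\eta_h\} \ge c\,\tfrac{\eta_i+\eta_h}{2}$ — hmm, this only gives a factor $(\eta_i+\eta_h)/2$, not $\eta_i+\eta_h$, so the crude logarithm bounds are not quite enough and I should instead be more careful. The main obstacle is precisely this: the first-order (linear) bounds lose a factor of roughly $2$, so I need to exploit the inequality more tightly. The fix I anticipate: do \emph{not} bound both logarithms by their tangent/chord lines simultaneously; instead keep $\log(1+c/a)$ exact (or use the sharper two-sided estimate) and note that the wwEF1 condition, written as $a + c\cdot\frac{\eta_i}{\min\{\eta_i,\eta_h\}} < \frac{\eta_i}{\eta_h} b$, actually says $a + c \le a + c\frac{\eta_i}{\min} < \frac{\eta_i}{\eta_h}b$ so that $c/a$ is bounded away from the regime where the linear approximation is tight; alternatively, reduce to the symmetric case by the scaling invariance of NW and a direct convexity argument on $f(t) = \eta_i\log(a + tc) + \eta_h\log(b - tc)$, checking $f'(0) > 0$, which gives exactly $\eta_i c/a > \eta_h c/b$, i.e. $\eta_i b > \eta_h a$ — and this \emph{is} implied directly by the wwEF1 violation $a/\eta_i < b/\eta_h$. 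Since $f$ is concave in $t$ on $[0,1]$ and... no — concavity would make $f(1) \ge f(0)$ fail to follow from $f'(0)>0$. So the right statement is: $f'(0) > 0$ together with concavity does \emph{not} suffice; I need $f(1) > f(0)$ directly. Therefore the cleanest correct approach is to verify the finite inequality $(a+c)^{\eta_i}(b-c)^{\eta_h} > a^{\eta_i}b^{\eta_h}$ head-on using the weighted AM–GM / the bound derived from wwEF1 with its \emph{exact} constant $\max\{\eta_i,\eta_h\}$, namely: set $p = \eta_i/(\eta_i+\eta_h)$, $q = \eta_h/(\eta_i+\eta_h)$, and show $p\log\frac{a+c}{a} + q\log\frac{b}{b-c} > 0$ using $\log(1+x)\ge x - x^2/2$-type care only where needed — and I expect that after substituting the wwEF1 bound $\eta_i b - \eta_h a > c\max\{\eta_i,\eta_h\}$ and using $a < \frac{\eta_i}{\eta_h}b - c\frac{\eta_i}{\min}$, the slack is in fact sufficient; pinning down this slack is the one genuinely delicate computation, and it is where I would spend the bulk of the proof.
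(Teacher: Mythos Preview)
Your setup is correct, and you correctly diagnose why the symmetric pair of logarithm bounds $\log(1+x)\ge x/(1+x)$ and $-\log(1-y)\le y/(1-y)$ is too lossy: it reduces to $\eta_i b-\eta_h a>c(\eta_i+\eta_h)$, whereas the wwEF1 violation only supplies $\eta_i b-\eta_h a>c\max\{\eta_i,\eta_h\}$. But after this point your proposal does not actually land on a working argument. The derivative idea $f'(0)>0$ fails for the reason you state, and the final paragraph is only a hope that ``the slack is in fact sufficient'' together with unspecified ``$\log(1+x)\ge x-x^2/2$-type care''; you never say what inequality you would apply or why it closes the factor-of-two gap. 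As written, the proof is incomplete at exactly the step that matters.

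The missing idea is simple and clean: do a case split on whether $\eta_i\ge\eta_h$ or $\eta_i<\eta_h$, and in each case apply Bernoulli's inequality $(1+x)^r\ge 1+rx$ (valid for $r\ge 1$, $x>-1$) to \emph{only one} of the two factors --- the one whose exponent, after normalizing, is at least $1$. Concretely, if $\eta_i\ge\eta_h$, raise the target inequality to the power $1/\eta_h$ and bound $(1+c/a)^{\eta_i/\eta_h}\ge 1+(\eta_i/\eta_h)(c/a)$; the resulting product $(1+(\eta_i/\eta_h)c/a)(1-c/b)>1$ is \emph{equivalent} to $a/\eta_i<(b-c)/\eta_h$, which is exactly the wwEF1 violation since $\min\{\eta_i,\eta_h\}=\eta_h$. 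If $\eta_i<\eta_h$, raise to the power $1/\eta_i$ and bound $(1-c/b)^{\eta_h/\eta_i}\ge 1-(\eta_h/\eta_i)(c/b)$; now the product condition is equivalent to $(a+c)/\eta_i<b/\eta_h$, again exactly the wwEF1 violation since now $\min\{\eta_i,\eta_h\}=\eta_i$. The point is that the $\min$ in the wwEF1 definition is precisely tailored so that, after the case split, Bernoulli on a single factor leaves no slack to recover --- the algebraic condition you need is identically the hypothesis. Your attempt to bound both logs at once is what created the spurious factor of $2$.
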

\begin{proof}
Let $v_i = v(\x_i)$ and $v_h = v(\x_h)$. Since $i$ wwEF1-envies $h$, we have for every $g\in \x_h$:
\begin{equation}\label{eqn:wwef1}
\frac{v_i}{\eta_i} < \frac{v_h}{\eta_h} - \frac{u_g}{\min\{\eta_i,\eta_h\}}
\end{equation}
Further, $v(\x'_i) = v_i + u_j$ and $v(\x'_h) = v_h - u_j$, where $j\in\x_h$ is transferred from $h$ to $i$ resulting in $\x'$. Note that 
\[\bigg(\frac{\NW(\x')}{\NW(\x)}\bigg)^{\sum_k\eta_k} = \bigg(1+\frac{u_j}{v_i}\bigg)^{\eta_i} \bigg(1-\frac{u_j}{v_h}\bigg)^{\eta_h}.\]
We consider two cases:

\begin{enumerate}
\item $\eta_i \ge \eta_h$. By Bernoulli's inequality, 
\[\bigg(1+\frac{u_j}{v_i}\bigg)^{\frac{\eta_i}{\eta_h}} \bigg(1-\frac{u_j}{v_h}\bigg) \ge \bigg(1+\frac{\eta_i}{\eta_h}\cdot\frac{u_j}{v_i}\bigg)\bigg(1-\frac{u_j}{v_h}\bigg).\]

Simple algebra shows that:
\[\bigg(1+\frac{\eta_i}{\eta_h}\cdot\frac{u_j}{v_i}\bigg)\bigg(1-\frac{u_j}{v_h}\bigg) > 1 \iff \frac{v_i}{\eta_i} < \frac{v_h-u_j}{\eta_h}.\]
The latter is true since $i$ is not wwEF1 towards $h$, see \eqref{eqn:wwef1}. Hence $\NW(\x') > \NW(\x)$.

\item $\eta_i < \eta_h$. By Bernoulli's inequality, 
\[\bigg(1+\frac{u_j}{v_i}\bigg)\bigg(1-\frac{u_j}{v_h}\bigg)^{\frac{\eta_h}{\eta_i}} \ge \bigg(1+\frac{u_j}{v_i}\bigg)\bigg(1-\frac{\eta_h}{\eta_i}\cdot\frac{u_j}{v_h}\bigg).\]

Once again, simple algebra shows that:
\[\bigg(1+\frac{u_j}{v_i}\bigg)\bigg(1-\frac{\eta_h}{\eta_i}\cdot\frac{u_j}{v_h}\bigg) > 1 \iff \frac{v_i + u_j}{\eta_i} < \frac{v_h}{\eta_h}.\]
The latter is true since $i$ is not wwEF1 towards $h$, see \eqref{eqn:wwef1}. Hence  $\NW(\x') > \NW(\x)$.
\end{enumerate}
Thus, transferring a good from $h$ to $i$ where $i$ wwEF1-envies $h$ increases the Nash welfare.
\end{proof}

Now consider the following algorithm. Given an initial allocation $\x$, sort and relabel the agents so that $v(\x_1)/\eta_1 \le v(\x_2)/\eta_2 \le \dots \le v(\x_n)/\eta_n$. Let $i$ be the smallest index wwEF1-envious agent, and let $h$ be the smallest index agent wwEF1-envied by $i$. Then we transfer a good $g$ from $h$ to $i$. If an agent $\ell$ now envies $i$, then we transfer $g$ from $i$ to $\ell$, and continue. Let $k$ be the last agent who gets $g$ in the above transfers, i.e, $k$ gets an \textit{`extra good'}. If the overall allocation is wwEF1, we stop; and if not, we repeat the above process. 

First, it is easy to see that no agent wwEF1-envies an agent $k$ just after $k$ receives an extra good $g$. This is true because if $k\ge 2$ then this is by design of the algorithm, and if $k=1$, then no agent will wwEF1-envy $k$ after removing $g$. We use this observation to show the following lemma.
\begin{lemma}
Let $i$ be an agent who receives an extra good. Then subsequently to $i$ receiving the extra good, the size of $i$'s bundle does not decrease.
\end{lemma}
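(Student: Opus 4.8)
The plan is to prove the stronger statement that once an agent $i$ receives an extra good, $i$ is never again chosen as the wwEF1-envied agent (the agent called $h$ in the algorithm); call an agent \emph{protected} once it has received an extra good. This suffices: in a single round exactly one good travels along a chain $h=p_0\to p_1\to\dots\to p_r=k$, where $p_1$ is the envying agent and $k=p_r$ is the extra-good recipient; every agent other than $p_0=h$ that appears on the chain either receives the good and passes it on (net change $0$) or is the final recipient (net change $+1$), so the only agent whose bundle strictly shrinks over a round is $h$. Moreover, between two consecutive times $i$ receives an extra good, $i$ is never the final recipient of a chain, so over such a stretch $|\x_i|$ can only decrease; hence if $i$ is additionally never chosen as $h$, then $|\x_i|$ is constant between re-receipts and strictly increases at each re-receipt, giving the claim. (This reduction is the routine part.)

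It remains to show that a protected agent is never selected as $h$. Since being selected as $h$ in a round requires being wwEF1-envied by some agent at the start of that round, I would prove the invariant $(\star)$: at the start of every round no protected agent is wwEF1-envied by any agent. I would prove $(\star)$ by induction on the rounds. In a round $t$ with envying agent $i_t$, envied agent $h_t$, and extra-good recipient $k_t$: for the newly protected agent $k_t$, the observation preceding the lemma says precisely that no agent wwEF1-envies $k_t$ at the end of round $t$. For an agent $a$ protected before round $t$ with $a\neq k_t$: first, $a\neq h_t$, since $h_t$ is wwEF1-envied and hence, by the inductive hypothesis, not protected; consequently $a$'s bundle and utility are unchanged (in net) over round $t$. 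Since the only agent whose utility decreases over round $t$ is $h_t$, the only way $a$ could become wwEF1-envied by the end of round $t$ is if $h_t$, after losing its transferred good $g$, comes to wwEF1-envy $a$.

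The main obstacle is ruling out this last possibility. The ingredients I would use are: (a) at the start of round $t$, $h_t$ does not wwEF1-envy $a$, since $a$ is unenvied there; and (b) from ``$i_t$ wwEF1-envies $h_t$ with respect to $g$'', a short case split on the sign of $\eta_{h_t}-\eta_{i_t}$ gives that after the transfer $h_t$'s ratio still strictly exceeds $i_t$'s pre-transfer ratio, i.e. $\frac{v(\x_{h_t})-u_g}{\eta_{h_t}}>\frac{v(\x_{i_t})}{\eta_{i_t}}$, so (as $i_t$ is the smallest-index envious agent) $h_t$'s post-transfer ratio exceeds the value-ratios of all agents with smaller index in the round-$t$ ordering. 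Writing out ``$i_t$ does not wwEF1-envy $a$'' (before) and ``$h_t$ wwEF1-envies $a$'' (after) for the largest good of $\x_a$ and chaining them through (b) forces $\min\{\eta_{h_t},\eta_a\}>\min\{\eta_{i_t},\eta_a\}$, hence $\eta_{i_t}$ is strictly the smallest of $\eta_{i_t},\eta_{h_t},\eta_a$. The remaining step is to contradict this weight configuration using the round-$t$ ordering of value ratios and the chain that originally made $a$ an extra-good recipient (in particular, the analogue of (b) along that chain bounds $a$'s value ratio when it was protected). I expect this bookkeeping with entitlements and the sorted order — rather than any of the preceding steps — to be where the genuine difficulty of the proof lies.
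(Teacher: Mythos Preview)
Your overall framework—proving the invariant $(\star)$ by induction on rounds and reducing to the claim that the round's envied agent $h_t$, after losing good $g$, does not wwEF1-envy any previously protected agent $a$—is exactly the paper's approach. The paper, like you, uses (in your notation) that $i_t$ wwEF1-envies $h_t$ and that $i_t$ does not wwEF1-envy $a$, and chains these through an inequality of the form $\frac{v(\x_{h_t})-u_g}{\eta_{h_t}}>\frac{v(\x_{i_t})}{\eta_{i_t}}$, which is your (b).

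Where you diverge is in the final step. You expect that once you arrive at the configuration $\eta_{i_t}<\min\{\eta_{h_t},\eta_a\}$, you will need to invoke the history of how $a$ became protected (``the chain that originally made $a$ an extra-good recipient'') to reach a contradiction. The paper does nothing of the sort: its entire argument for this step is a two-case analysis on whether $\min(\eta_a,\eta_{h_t})\ge\min(\eta_a,\eta_{i_t})$ or $\min(\eta_a,\eta_{h_t})<\min(\eta_a,\eta_{i_t})$, using only the two local facts ``$i_t$ wwEF1-envies $h_t$'' and ``$i_t$ does not wwEF1-envy $a$'' together with elementary weight arithmetic (e.g., $\eta_{i_t}<\eta_{h_t}$ forces $\min(\eta_a,\eta_{i_t})\le\min(\eta_a,\eta_{h_t})$, collapsing one branch). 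In particular, the paper uses neither your fact (a) nor any information about earlier rounds beyond the invariant $(\star)$ itself. So your plan is structurally identical to the paper's but anticipates substantially more machinery at the end than the paper actually deploys; the bookkeeping with entitlements that you flag as the ``genuine difficulty'' is, in the paper, a few lines of min-comparison rather than an appeal to the protecting chain.
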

\begin{proof}
Let $i$ be an agent who received an extra good. As argued above, just after $i$ received this good, all agents were wwEF1 towards $i$. Now the bundle of $i$ can reduce in size in some subsequent step only if some agent $h$ wwEF1-envies $i$. This can happen only if $h$ loses a good to some agent agent $k$. Let $\x$ be the allocation just before $h$ loses a good to $k$. Because we transfer to the smallest index wwEF1-envious agent, we must have $v_k/\eta_k \le v_i/\eta_i \le v_h/\eta_h$, where $v_i = v(\x_i)$. We also have:
\begin{enumerate}
\item $\frac{v_k}{\eta_k} < \frac{v_h}{\eta_h} - \frac{u_j}{\min(\eta_h,\eta_k)}$ for all $j\in \x_h$ ($k$ wwEF1-envies $h$), and
\item $\frac{v_k}{\eta_k} \ge \frac{v_i}{\eta_i} - \frac{u_{j'}}{\min(\eta_i,\eta_k)}$ for some $j'\in \x_i$ ($k$ doesn't wwEF1-envy $i$).
\end{enumerate}

\noindent We have the following cases:
\begin{enumerate}
\item $\min(\eta_i,\eta_h) \ge \min(\eta_i,\eta_k)$. In this case:
\begin{equation*}
\begin{aligned}
\frac{v_h-u_j}{\eta_h} &\ge \frac{v_h}{\eta_h} - \frac{u_j}{\min(\eta_h,\eta_k)} > \frac{v_k}{\eta_k} \ge \frac{v_i}{\eta_i} - \frac{u_{j'}}{\min(\eta_i,\eta_k)} \ge \frac{v_i}{\eta_i} - \frac{u_{j'}}{\min(\eta_i,\eta_h)},
\end{aligned}
\end{equation*}
thus showing $h$ is wwEF1 towards $i$ after the transfer.
\item $\min(\eta_i,\eta_h) < \min(\eta_i,\eta_k)$. In this case, first suppose $\eta_h = \min(\eta_h,\eta_k)$, in which case the first inequality of Case 1 is equality, and we still obtain that $h$ does not wwEF1-envy $i$ after the transfer. Otherwise we have $\eta_k = \min(\eta_h,\eta_k)$, i.e., $\eta_k < \eta_h$. But then $\min(\eta_i,\eta_h) < \min(\eta_i,\eta_k)$ cannot happen.
\end{enumerate}
In conclusion, if a good is transferred from $h$ to $k$, then $h$ cannot wwEF1-envy $i$ after the transfer. Thus, the agent $i$ who received an extra good cannot later be wwEF1-envied by an agent $h$, implying that $i$ cannot lose a good. Thus the lemma holds.
\end{proof}

With this lemma, we can see that the algorithm terminates in $O(nm)$ transfers. Indeed, whenever the allocation is not wwEF1, performing transfers from envied to envious agents ends up in some agent $i$ receiving an extra good, after which $i$ does not lose any goods. Thus after $O(nm)$ transfers the allocation must be wwEF1. Moreover, Lemma~\ref{lem:nswinc} shows that the Nash welfare does not decrease during transfers performed by the above algorithm. We can therefore initialize the algorithm with the allocation output by our PTAS, thus obtaining an $\varepsilon$-approximate MNW allocation that is also wwEF1. This shows Theorem~\ref{thm:ptasfair}.

\section{Few valuable goods}\label{sec:2valuable}
We now turn to instances where each agent derives positive utility from only a few goods. Recall from Definition~\ref{def:instancetype} that an instance $(N,M,V,\{\eta_i\}_{i\in N})$ is said to be $t$-valuable if for all agents $i\in N$, there exists a set $T_i \subseteq M$ with $|T_i| \le t$ such that $v_i(S) = v_i(S\cap T)$ for every $S\subseteq M$.

We first show that for 2-valuable instances, where each agent finds at most two goods valuable, a Nash optimal allocation can be computed in strongly polynomial-time, even for asymmetric agents with arbitrary monotone valuations.
\begin{theorem}\label{thm:2valuable}
The Nash welfare problem for 2-valuable instances with asymmetric agents with arbitrary monotone valuations admits a strongly polynomial-time algorithm.
\end{theorem}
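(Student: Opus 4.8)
The plan is to reduce the problem to polynomially many maximum-weight bipartite perfect matching instances, each solvable in strongly polynomial time. First discard every good that no agent values; such goods can be appended to an arbitrary agent at the end without affecting $\NW$. For an agent $i$ the only relevant data are the at most four achievable utilities $v_i(\emptyset)=0$, $v_i(\{a\})$, $v_i(\{b\})$, $v_i(\{a,b\})$, where $T_i=\{a,b\}$ (fewer when $|T_i|\le 1$; note $v_i(T_i)$ is the maximum possible utility of $i$ by monotonicity). Build a multigraph $H$ on the remaining ("relevant") goods in which agent $i$ is an edge spanning $T_i$, a loop when $|T_i|=1$. Since $v_i(\x_i)=v_i(\x_i\cap T_i)$ and $T_i$ lies inside a single connected component of $H$, and since maximizing $\NW$ is the same as maximizing $\sum_i \eta_i\ln v_i(\x_i)$, the optimization splits over the connected components of $H$: I would solve each component independently and output the union of the partial allocations (plus the arbitrarily placed irrelevant goods).

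Next I would pin down the structure. If the instance admits an allocation of positive Nash welfare, then every agent gets a good it values, so choosing one such good per agent gives a matching in the bipartite "agent--valued good" graph that saturates all agents; by Hall's theorem every set of edges of $H$ meets at least as many vertices, i.e. $H$ is a pseudoforest, so each component is a tree ($k$ goods, $k-1$ agents) or unicyclic ($k$ goods, $k$ agents). If $H$ is not a pseudoforest, or some agent has $v_i(T_i)=0$, then $\NW\equiv 0$ over all allocations and we return an arbitrary one (this degenerate case is in any case excluded by the standing positivity assumption). Moreover, in a positive-welfare allocation each component's goods must be allocated among that component's own agents: in a unicyclic component this is a perfect matching of its $k$ agents to its $k$ goods (such a matching always exists --- orient the non-cyclic edges toward the leaves and match the cycle cyclically), and in a tree component either exactly one (necessarily two-valued) agent $i^*$ receives both goods of $T_{i^*}$ with the remaining $k-2$ agents perfectly matched to the remaining $k-2$ goods, or all agents receive exactly one valued good, perfectly matched to $k-1$ of the goods with the last one appended value-irrelevantly.

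The algorithm then is: for a unicyclic component, solve the assignment problem of matching its agents to its goods with the weight of giving good $j$ to agent $i$ equal to $\eta_i\ln v_i(\{j\})$ (delete the pair when $v_i(\{j\})=0$), using any strongly polynomial maximum-weight perfect matching routine (e.g. the Hungarian method). For a tree component, enumerate the $O(k)$ options --- which two-valued agent (if any) is doubled, or which good (if any) is unused --- and for each solve the induced perfect-matching instance on the remaining agents and goods, also accounting for the fixed factor $v_{i^*}(T_{i^*})^{\eta_{i^*}}$ of the doubled agent; keep the best feasible option per component. Comparing two candidate (partial) allocations only requires comparing the products $\prod_i v_i(\x_i)^{\eta_i}$, so no logarithms actually need to be evaluated --- the matching subroutine only adds and compares these symbolic weights --- and the whole procedure runs in strongly polynomial time.

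The main obstacle is that $\NW$ is a weighted geometric mean rather than an additive objective, which blocks encoding the whole problem as a single min-cost flow: the utility an agent derives from receiving both of its goods is not the sum of the two single-good utilities. The pseudoforest structure is exactly what tames this --- it forces at most one doubled agent per tree component, confining the non-additive behavior to a single enumerable choice, after which the residual problem is a clean assignment problem. Hence the crux of the write-up is establishing the structural dichotomy: positive welfare $\Rightarrow$ $H$ is a pseudoforest $\Rightarrow$ the per-component matching picture, including feasibility of the required perfect matchings; the matching computations themselves are standard.
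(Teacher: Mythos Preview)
Your proof is correct and takes a genuinely different route from the paper's. The paper never introduces your multigraph-on-goods $H$ or the pseudoforest observation. Instead, after assigning forced goods it eliminates $K_{2,2}$ subgraphs of the bipartite agent--good graph by brute-forcing the two possible assignments, and then encodes the remaining problem as a \emph{single} maximum-weight matching instance on the vertex set $N\cup M$: for each undecided agent $i$ with $T_i=\{j,j'\}$ it inserts the three edges $(i,j),(i,j'),(j,j')$ with weights $\eta_i\log v_i(\cdot)$ shifted by a large constant $C$, the constant forcing every such agent to be covered. Your approach trades that encoding trick for structure: Hall's condition on the agent side is exactly the pseudoforest condition on $H$, which localizes the non-additivity (the ``doubled'' agent) to at most one per tree component and none per unicyclic component, after which everything is an assignment problem. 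What you gain is transparency---no $K_{2,2}$ casework, no big-$C$ gadget, and the parallel-edge case (two agents with identical $T_i$) falls out automatically as a 2-cycle; what the paper gains is that it solves one matching instead of $O(k)$ per tree component. In fact your per-component matchings are even simpler than you state: in a unicyclic component there are exactly two perfect matchings (the two cyclic orientations), and in a tree component, once the doubled agent or the spare good is fixed, the residual matching is \emph{unique} (orient away from the removed vertex), so no Hungarian method is needed at all. The one soft spot---comparing sums of terms $\eta_i\log v_i(\cdot)$ without actually taking logarithms---is present in both arguments and is a model-of-computation issue rather than a gap in your reduction.
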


Let us first assume that $\NW(\x^*) > 0$, where $\x^*$ is an MNW allocation. If $\NW(\x^*) = 0$, then we will detect it in the following discussion and simply return any allocation. Construct a bipartite graph $G = (N,M,E)$ with agents on one side and goods on the other, with an edge between $i\in N$ and $j\in M$ iff $j\in T_i$. Note that each $i\in N$ has degree at most 2. For any $j\in M$ with only one neighbor $i\in N$, we can straightaway assign $j$ to $i$, since every other agent derives no value from $j$, and doing so does not hurt $i$ due to monotonicity of valuations. Similarly, for any $i\in N$ with only one neighbor $j$, i.e., $T_i = \{j\}$, we assign $j$ to $i$, since this is the only good that can give positive utility to $i$. Let $N'$ be the set of agents which have received goods in this manner. If an agent $i\in N'$ receives two goods $\{j_1,j_2\}$, then we can safely remove $i,j_1,j_2$ from $G$ since she has received both goods she values. Thus, we are left with an updated bipartite graph $G$, an updated set of agents $N$ (and its subset $N'$), and a partial allocation in which each agent $i\in N'$ receives a single good $g_i$, and each agent not in $N'$ has not received any good.

It is easy to observe that since $\NW(\x^*)>0$, there cannot be a set of $r$ agents who value $s<r$ goods together, which would imply that in any allocation there is an agent with no goods. This can be checked in polynomial-time by constructing a Hall's violator set of agents in $G$.

We now identify if there is a $K_{2,2}$-subgraph of $G$. Suppose $\{i_1,i_2,j_1,j_2\}$ is such a subgraph, where $i_1,i_2\in N$, $j_1,j_2\in M$. Since each $i\in N$ has degree at most 2, $T_{i_1} = T_{i_2} = \{j_1,j_2\}$, i.e., both agents can receive positive utility only from $\{j_1,j_2\}$, and $i_1,i_2\notin N'$. Thus $i_1$ and $i_2$ must each receive one of $\{j_1,j_2\}$. Out of the two allocations possible, we can pick the one with higher Nash welfare by comparing $v_{i_1}(j_1)^{\eta_{i_1}}\cdot v_{i_2}(j_2)^{\eta_{i_2}}$ and $v_{i_1}(j_2)^{\eta_{i_1}}\cdot v_{i_2}(j_1)^{\eta_{i_2}}$. Since a $K_{2,2}$ subgraph of $G$ can be identified in $O(n^2)$ time and there can be at most $O(n^2)$ of them, we can identify and eliminate all such $K_{2,2}$-subgraphs of $G$ in polynomial time by selecting the allocation with higher Nash welfare for each such subgraph. 

Suppose we now have an updated $K_{2,2}$-free bipartite graph $G$. Once again, we identify agents $i\notin N'$ who have not received a good so far, and have a single neighbor $j$. Clearly, $j$ must be allocated to $i$ for $i$ to get positive utility. We do so, and add $i$ to $N'$. Finally we remove agents who have received both their valuable goods. This procedure only deletes edges of $G$, and hence does not add new $K_{2,2}$-subgraphs. Therefore, eventually we obtain a $K_{2,2}$-free bipartite graph and a set of agents $N'$ who each have received a single good, and each agent not in $N'$ has not received any good.

From the above discussion, we can infer that any agent $i\notin N'$ must have a degree of 2 in $G$, and any agent $i\in N'$ must have a degree of 1. 
We update $N$ and $M$ to be the sets of remaining agents and goods respectively. We now decide how to allocate the goods in $M$ to agents in $N$ by constructing another weighted graph $H = (N\sqcup M, E', w)$ with vertices $N\cup M$. For each $i\in N'$ with $j$ as its neighbor in $G$, we add an $(i,j)$ edge in $H$ with weight $\eta_i\log(v_i(\{g_i,j\}))$, where $g_i$ is the good $i$ has already received from the previous phases. For each $i\notin N'$ with $j,j'$ as its neighbors in $G$, we add three edges in $H$: edge $(i,j)$ with weight $C + \eta_i\log(v_i(\{j\}))$, edge $(i,j')$ with weight $C + \eta_i\log(v_i(\{j'\}))$, and edge $(j,j')$ with weight $C + \eta_i\log(v_i(\{j,j'\}))$, where $C$ is a large constant with value more than $\sum_i \eta_i \log(v_i(T_i))$. Since $G$ is $K_{2,2}$-free, $H$ is a simple graph, i.e., there is at most one edge between any pair of vertices.

We now compute a maximum weight matching $E'\subseteq E$ in $H$ in polynomial-time. Using this matching, we recover the MNW allocation as follows: For every edge $(i,j)\in E'$, we assign $j$ to $i$. For every edge $(j,j')\in E'$, we assign both $j$ and $j'$ to the unique agent $i$ who values both $j$ and $j'$. 

To see correctness, let $\x$ be any allocation of $M$ to $N$ with positive Nash welfare. Consider the following set of edges $E'$ in $H$: For every $i$ s.t. $\x_i  = \{j,j'\}$, add $(j,j')$ to $E'$. For every $i$ s.t. $\x_i = \{j\}$, add $(i,j)$ to $E'$. We first show that:
\begin{lemma}\label{lem:matching-weight}
$E'$ is a matching in $H$ with weight $|N\setminus N'|C + (\sum_{i}\eta_i)\log\NW(\x)$.
\end{lemma}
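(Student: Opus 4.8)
The plan is to verify the three claims bundled in Lemma~\ref{lem:matching-weight}: that $E'$ is a valid set of edges in $H$, that no two edges in $E'$ share an endpoint (matching property), and that the total weight equals $|N\setminus N'|C + (\sum_i \eta_i)\log\NW(\x)$. I would begin by noting that $H$ was constructed precisely so that every possible bundle an agent can receive corresponds to an edge: for $i\in N'$ the only valuable good not yet held is its unique $G$-neighbor $j$, so $\x_i$ with positive contribution must be $\{j\}$ (together with the already-held $g_i$), giving edge $(i,j)$; for $i\notin N'$ the bundle $\x_i$ is among $\{j\},\{j'\},\{j,j'\}$ where $j,j'$ are $i$'s two $G$-neighbors, giving edges $(i,j)$, $(i,j')$, or $(j,j')$ respectively. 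So every edge added to $E'$ is genuinely an edge of $H$, using that $\x$ has positive Nash welfare (hence every agent gets a positively-valued bundle, so the bundle must be nonempty and contained in $T_i$).

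Next I would establish the matching property. Each agent vertex $i$ is incident to at most one edge of $E'$ by construction (we add exactly one edge per agent, determined by $\x_i$). For a good vertex $j$: since $\x$ is an allocation, $j$ is assigned to exactly one agent $i$. If $i\in N'$, the edge contributed is $(i,j)$; if $i\notin N'$ and $\x_i=\{j\}$, the edge is $(i,j)$; if $\x_i=\{j,j'\}$, the edge is $(j,j')$. In every case $j$ appears in exactly one contributed edge, so no good vertex is covered twice. Here I should also use $K_{2,2}$-freeness of $G$ — this guarantees $H$ is simple, so that distinct agents with distinct bundles really do give distinct edges and there is no ambiguity about which edge a pair $(j,j')$ refers to (only one agent values both).

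Finally, the weight computation: sum the edge weights. For $i\in N'$ the edge weight is $\eta_i\log v_i(\{g_i,j\}) = \eta_i\log v_i(\x_i)$ (no $C$ term). For $i\notin N'$, in each of the three cases the edge weight is $C + \eta_i\log v_i(\x_i)$ (using $v_i(\x_i)=v_i(\{j\})$, $v_i(\{j'\})$, or $v_i(\{j,j'\})$ as appropriate). Summing, the $C$ terms contribute $C$ once per agent in $N\setminus N'$, i.e.\ $|N\setminus N'|C$, and the logarithmic terms contribute $\sum_{i\in N}\eta_i\log v_i(\x_i) = (\sum_i\eta_i)\log\big(\prod_i v_i(\x_i)^{\eta_i}\big)^{1/\sum_i\eta_i}\cdot\frac{1}{1}$ — more precisely, $\sum_i \eta_i \log v_i(\x_i) = (\sum_i \eta_i)\log\NW(\x)$ by the definition of $\NW$. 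This gives the stated weight.

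I do not anticipate a serious obstacle here; the lemma is essentially unwinding the definition of $H$. The one point requiring care is the case analysis ensuring that a good vertex is never double-covered — in particular that when $\x_i=\{j,j'\}$ the pair-edge $(j,j')$ genuinely exists in $H$ and is attributable to $i$, which is exactly where $K_{2,2}$-freeness (simplicity of $H$) and the degree structure established earlier (agents in $N\setminus N'$ have degree exactly $2$, agents in $N'$ have degree exactly $1$) are used. A secondary subtlety is handling agents in $N'$ whose full bundle in the original instance is $\{g_i,j\}$: the edge weight $\eta_i\log v_i(\{g_i,j\})$ must match $\eta_i\log v_i(\x_i)$, which holds because in the restricted allocation $\x_i=\{j\}$ but the agent's realized utility accounts for the previously assigned $g_i$, consistent with how $H$ was set up.
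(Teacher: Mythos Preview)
Your proposal is correct and follows essentially the same approach as the paper: verify $E'\subseteq E(H)$, argue no two edges share a vertex, then sum the weights. Your matching argument is organized as a direct vertex-covering count (each agent vertex is touched by at most the single edge contributed for that agent; each good vertex is touched by at most the edge contributed by the unique agent receiving that good), whereas the paper proceeds by a brief case analysis on pairs of edges that could share a vertex; your organization is arguably cleaner and in particular transparently handles the case of two good--good edges $(j,j')$, $(j,j'')$ sharing a good vertex, which the paper's case list does not list explicitly.
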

\begin{proof}
For sake of contradiction assume $E'$ is not a matching, and two edges of $E'$ share a vertex. If $(i,j), (i,j') \in E$ for $j\neq j'$, then $\x_i = \{j\} = \{j'\}$, which is not possible if $j\neq j'$. Similarly if $(i,j),(i',j)\in E'$, then $\x_i = \x_{i'} = \{j\}$, which is not possible since $\x$ is an allocation. Finally, if $(i,j),(j,j')\in E$, then let $i'$ be the unique agent who values both $j$ and $j'$. Then the construction of $E'$ implies $\x_i = \{j\} \subseteq \{j,j'\} = \x_{i'}$, which is not possible for an allocation $\x$. Hence, $E'$ is a matching in $H$.

To compute the weight of the matching, note that we add an edge in $E'$ for every agent in $N$. For any $i\in N'$, the weight of the edge added is $\eta_i\log(v_i(\x_i))$, and for any $i\notin N'$, the weight of the edge added is $C+\eta_i\log(v_i(\x_i))$. Hence, the weight of the matching is $|N\setminus N'|\cdot C+(\sum_{i}\eta_i)\log \NW(\x)$.
\end{proof}

This shows that every allocation with positive Nash welfare corresponds to a matching in $H$. Let $E^*$ be a maximum weight matching $E^*$ in $H$. We construct the following (possibly partial) allocation $\x$ from $E^*$: For every $(i,j)\in E^*$, set $\x_i = \{j\}$, and for every $(j,j')\in E^*$, set $\x_i = \{j,j'\}$, where $i$ is the unique agent who values both $j$ and $j'$. We show in the subsequent proof that if an item $j$ is unallocated in $\x$, then it can be allocated to an agent without affecting the Nash welfare, resulting in an allocation $\x^*$. We show that:
\begin{lemma}\label{lem:alg-correctness}
$\x^*$ is an MNW allocation.
\end{lemma}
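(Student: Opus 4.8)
The plan is to establish a two-way correspondence between allocations of positive Nash welfare and matchings in $H$, and then argue that the maximum weight matching yields (up to allocating leftover goods) an MNW allocation. Lemma~\ref{lem:matching-weight} already gives one direction: every allocation $\x$ with $\NW(\x)>0$ maps to a matching $E'$ in $H$ of weight $|N\setminus N'|\cdot C + (\sum_i \eta_i)\log\NW(\x)$. In particular, if $\x$ is an MNW allocation $\x^*$, the corresponding matching $E'$ has weight $|N\setminus N'|\cdot C + (\sum_i \eta_i)\log\NW(\x^*)$, so the maximum weight matching $E^*$ has weight at least this much.

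Next I would argue that the maximum weight matching $E^*$ is necessarily a \emph{perfect} matching on the agent side, i.e., every agent $i\in N$ is covered by $E^*$: this is where the large constant $C$ is used. Since $C > \sum_i \eta_i\log(v_i(T_i))$ and all the ``$\log v_i(\cdot)$'' contributions are bounded in absolute value by this same quantity (using $\NW(\x^*)>0$, which we have already verified is decidable via a Hall-type condition, to rule out degeneracies), a matching that covers fewer agents in $N\setminus N'$ has weight strictly smaller than one covering more of them, regardless of the logarithmic terms. Hence $E^*$ covers all of $N\setminus N'$; it also covers all of $N'$ since each such agent $i$ has a unique incident edge $(i,g_i{\text{-neighbor}})$ and dropping it only loses weight. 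So $E^*$ induces a genuine (possibly partial, since some goods may be unmatched) allocation $\x$: each matched edge $(i,j)$ gives $j$ to $i$; each matched edge $(j,j')$ gives both goods to the unique agent valuing them; and every agent receives exactly the goods it would receive in some valid assignment because $K_{2,2}$-freeness guarantees $H$ is simple and the edge structure mirrors the degree-$\le 2$ structure of $G$.

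Then I would translate the weight of $E^*$ back into Nash welfare. By the same accounting as in Lemma~\ref{lem:matching-weight}, the weight of $E^*$ equals $|N\setminus N'|\cdot C + (\sum_i \eta_i)\log\NW(\x)$, where $\x$ is the (partial) allocation induced by $E^*$. Since $w(E^*)\ge w(E')$ for the matching $E'$ coming from $\x^*$, we get $\log\NW(\x)\ge \log\NW(\x^*)$, hence $\NW(\x)\ge\NW(\x^*)$. Finally, any good $j\in M$ left unmatched by $E^*$ can be handed to an arbitrary agent: since $\x$ already has positive Nash welfare, every agent's utility is already positive, and by monotonicity adding $j$ cannot decrease any $v_i(\x_i)$; this yields a complete allocation $\x^*$ with $\NW(\x^*_{\text{new}})\ge\NW(\x)\ge\NW(\x^*)$, so $\x^*_{\text{new}}$ is an MNW allocation. (Together with the preprocessing steps that forced certain goods onto agents in $N'$ and resolved $K_{2,2}$ subgraphs, this gives the full allocation.)

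The main obstacle is the argument that $E^*$ covers every agent in $N\setminus N'$ and that the induced assignment is actually a valid allocation — that is, making precise that the large constant $C$ dominates all logarithmic contributions \emph{simultaneously} while also ruling out the edge case $\NW(\x^*)=0$ (so that the logarithms are well-defined and bounded). This requires being careful that the lower bound on $C$, namely $C>\sum_i\eta_i\log(v_i(T_i))$, genuinely exceeds the maximum possible absolute value of $\sum_i\eta_i\log(v_i(\x_i))$ over allocations of positive welfare; since $0 < v_i(\x_i)\le v_i(T_i)$ by monotonicity, each term $\eta_i\log v_i(\x_i)$ is at most $\eta_i\log v_i(T_i)$ from above, and the ``only positive-welfare allocations exist'' guarantee (checked via the Hall condition) together with the preprocessing bounds it from below, so the claim goes through. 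Everything else is bookkeeping on the bipartite-graph structure and the weight additivity already set up in Lemma~\ref{lem:matching-weight}.
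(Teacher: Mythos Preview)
Your approach mirrors the paper's: use Lemma~\ref{lem:matching-weight} for one direction, exploit the large constant $C$ to force coverage of every agent in $N\setminus N'$, and then read off an allocation from a maximum-weight matching in $H$. The overall strategy is correct and essentially the same.

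One step, however, does not go through as written: your claim that $E^*$ necessarily covers every $i\in N'$ because ``dropping it only loses weight.'' The unique edge $(i,j)$ incident to $i\in N'$ in $H$ has weight $\eta_i\log v_i(\{g_i,j\})$, which can be negative (nothing rules out $v_i(\{g_i,j\})<1$), and even if it were positive, the good $j$ may already be matched via another edge in $E^*$, so you cannot simply add $(i,j)$. Consequently, your assertion that the weight of $E^*$ equals $|N\setminus N'|\cdot C+(\sum_i\eta_i)\log\NW(\x)$ is not justified: an uncovered $i\in N'$ contributes $v_i(\{g_i\})$ to the Nash welfare but nothing to the matching weight, breaking the accounting you import from Lemma~\ref{lem:matching-weight}.

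The paper handles this differently: it does not try to argue coverage of $N'$ at all. Instead, it first augments $E^*$ to be incident on every \emph{good} (whenever $j$ is unmatched, replace an edge $(i,j')\in E^*$ by $(j,j')$, which by monotonicity of $v_i$ never decreases the weight), thereby obtaining a complete allocation $\x^*$ while staying at maximum weight; only then does it invoke the $C$-argument for $N\setminus N'$ and apply Lemma~\ref{lem:matching-weight}. Your alternative of handing leftover goods to arbitrary agents \emph{after} establishing optimality is also workable, but you should drop the $N'$-coverage claim and track the weight--welfare correspondence without it.
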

\begin{proof}
It is easy to see that $\x$ is a partition of (possibly a subset of) goods of $M$ since $E^*$ is a matching. Suppose some good $j$ is not allocated to any agent in $\x$. Thus, no edge of $E^*$ is incident on $j$. Clearly $j$ must have a degree at least 2 in $G$, otherwise we would have allocated $j$ to its unique neighbor before constructing $H$. Suppose there is an agent $i\in N$ s.t. edge $(i,j)$ in $G$, and no edge of $E^*$ is incident on $i$. Then $E^*\cup \{(i,j)\}$ is a matching in $G$ with higher weight than $E^*$, is a contradiction. Hence, for every agent $i$ which is adjacent to $j$, there is some edge $(i,j')\in E^*$ incident on $i$, for some $j'\in M$. Since $E^*$ is a matching, $(i,j')$ is the only edge of $E^*$ incident on $j'$. Hence $E' = E^*\cup \{(j,j')\} \setminus \{(i,j')\}$ is also a matching in $G$. By monotonicity of valuations, the weight of $(j,j')$ is no less than the weight of $(i,j')$ and hence we get a (weakly)-better matching $E'$ which is incident on $j$. In this manner, we can get an optimal matching which is incident on all goods. From this matching, we can construct the complete allocation $\x^*$.

Now suppose there is an agent $i\in N$ who gets 0 utility under $\x^*$, i.e., $v_i(\x^*_i) = 0$. Clearly, $i\notin N'$ since agents in $N'$ already have received a good. This implies that $E^* \cap \{(i,j),(i,j'),(j,j')\} = \emptyset$. Observe that in any matching in $H$, at most one of three edges $\{(i',g),(i',g'),(g,g')\}$ can be present, where $i'\notin N'$ and $g,g'$ are the goods that $i'$ values. Hence, the weight of $E^*$ is at most $(|N\setminus N'
|-1)C + (\sum_{i\in N': v_i(\x^*_i)>0} \eta_i\log(v_i(\x^*_i)) )$. Let $\Tilde{\x}$ be an MNW allocation where $\NW(\Tilde{\x}) > 0$. From Lemma~\ref{lem:matching-weight}, there is a matching $E'$ in $H$ with weight $|N\setminus N'|C + (\sum_i \eta_i)\log \NW(\Tilde{\x})$. By the choice of $C$, $E'$ has strictly higher weight than $E^*$, which contradicts the optimality of $E^*$. Thus, every agent gets positive utility under $\x^*$, showing $\NW(\x^*)>0$. 

This implies using Lemma~\ref{lem:matching-weight} that $E^*$ has weight $|N\setminus N'|C + (\sum_i w_i)\log \NW(\x^*)$, which is at least $|N\setminus N'|C + (\sum_i \eta_i)\log \NW(\Tilde{\x})$. This shows $\NW(\x^*) \ge \NW(\Tilde{\x})$, implying that $\x^*$ is Nash optimal too.
\end{proof}

Since the algorithm runs in polynomial-time, we conclude that for 2-valuable instances with arbitrary monotone valuations, an MNW allocation is polynomial-time computable.

In contrast, we note that it is NP-hard to compute a MNW allocation for 3-valuable instances with symmetric, additive agents \cite{amanatidis2020mnwefx}.

\section{FPTAS for Constantly Many Agents}
In this section, we extend the FPTAS of~\cite{nguyen2014nsw-np} for constantly many symmetric agents with additive valuations to the asymmetric case. This result is especially interesting in light of the fact that the MNW problem is NP-hard for even $n=2$ identical, symmetric, additive agents (reduction from the Partition problem).

\begin{theorem}\label{thm:fptas}
The Nash welfare problem for constantly many asymmetric agents with additive valuations admits a fully polynomial-time approximation scheme (FPTAS).
\end{theorem}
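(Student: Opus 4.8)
The plan is to reduce the MNW problem for $n = O(1)$ asymmetric agents to a polynomially-bounded dynamic program over rounded utility profiles, just as in \cite{nguyen2014nsw-np}, and then check that the asymmetry only affects the \emph{objective} evaluated on a profile, not the structure of the state space. First I would argue that, up to a $(1-\varepsilon)$ loss, we may assume all utilities lie in a bounded range: scale so that $\max_j v_{ij} = 1$ for each agent (this is the standard per-agent normalization, which is fine for the weighted geometric mean since the optimum is invariant under uniform scaling of weights but here we scale \emph{values}, so I must be careful to track the scaling factors $\prod_i (\text{factor}_i)^{\eta_i}$ explicitly). Goods with $v_{ij} < \varepsilon/m$ for agent $i$ can be treated as contributing $0$ to $i$ with only a $(1+\varepsilon)$ multiplicative distortion of $v_i(\x_i)$ when $v_i(\x_i) \ge 1$; since in the optimum every agent has value at least $\max_j v_{ij} = 1$ (give each agent their favorite good — this needs a Hall-type argument or just the observation that $n$ is constant and we may assume $m \ge n$ and the optimum is positive), this is safe.

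Next I would discretize: round each $v_{ij}$ down to the nearest power of $(1+\varepsilon/m)$, giving $O(\frac{m}{\varepsilon}\log\frac{m}{\varepsilon})$ distinct values per agent. The key observation from \cite{nguyen2014nsw-np} is that the vector of achievable utility profiles $(v_1(\x_1),\dots,v_n(\x_n))$ after processing the first $t$ goods, where each coordinate is again rounded to a power of $(1+\varepsilon/m)$, takes only $\poly{m,1/\varepsilon}^{n}$ values, which is polynomial for constant $n$. So I would run a DP over $t = 0,1,\dots,m$ maintaining the set $R_t$ of reachable (rounded) profiles: from each profile in $R_{t-1}$ and each choice of which agent gets good $t$, add the resulting rounded profile to $R_t$. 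At the end, over all profiles in $R_m$, output the one maximizing $\prod_i (\text{value}_i)^{\eta_i}$ — and here is the \emph{only} place the weights enter, as a simple evaluation of a real-valued function on each of polynomially many profiles; ties and the final un-rounding to an actual allocation are handled by storing a back-pointer in the DP.

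The main obstacle is controlling the accumulated rounding error through the DP so that the final per-agent value is within a $(1\pm\varepsilon)$ factor of the true value of the recovered allocation, and then converting a multiplicative $(1\pm\varepsilon)$ error on \emph{each} $v_i(\x_i)$ into a multiplicative $(1\pm O(\varepsilon))$ error on the weighted geometric mean $\big(\prod_i v_i(\x_i)^{\eta_i}\big)^{1/\sum_i \eta_i}$. The latter is easy — a product of per-coordinate $(1\pm\varepsilon)$ factors raised to weights $\eta_i/\sum_k \eta_k$ summing to $1$ stays within $(1\pm\varepsilon)$, exactly as in Lemma~\ref{lem:approximation} — so the weighted case is no harder than the symmetric one. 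The former is the standard rounding-in-a-DP analysis: each of the $\le m$ steps multiplies the error by at most $(1+\varepsilon/m)$, so the total distortion is $(1+\varepsilon/m)^m \le e^{\varepsilon} \le 1+O(\varepsilon)$; choosing the rounding granularity as $\varepsilon' = \Theta(\varepsilon/m)$ and replacing $\varepsilon$ by $\varepsilon/c$ for a suitable constant $c$ gives the claimed $(1-\varepsilon)$-approximation. The running time is $|R_t| \cdot n$ per step, and $|R_t| \le \big(O(\tfrac{m}{\varepsilon}\log\tfrac{m}{\varepsilon})\big)^n$, which is $\poly{m, 1/\varepsilon}$ for constant $n$; combined with the back-pointer reconstruction this yields the FPTAS.
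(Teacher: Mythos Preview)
Your core approach---a dynamic program over goods that maintains a polynomially-bounded set of rounded utility profiles, with the weights $\eta_i$ entering only in the final evaluation of $\prod_i v_i^{\eta_i}$---is exactly the paper's proof, and your observation that per-coordinate $(1\pm\varepsilon)$ errors pass unchanged through the weighted geometric mean (because the exponents $\eta_i/\sum_k\eta_k$ sum to $1$) is the key point.

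One genuine issue: your preprocessing step is both unnecessary and incorrectly justified. The claim ``in the optimum every agent has value at least $\max_j v_{ij}=1$'' is false---take two agents who both value good $A$ at $1$, where agent~$1$ also values good $B$ at some tiny $\delta>0$ and agent~$2$ values $B$ at $0$; the MNW allocation gives $A$ to agent~$2$ and $B$ to agent~$1$, so agent~$1$ gets only $\delta$. If $\delta<\varepsilon/m$ your thresholding zeroes out agent~$1$'s only useful good and the rounded instance has optimal NW equal to $0$. The paper sidesteps this entirely by working with integer inputs: then every nonzero accumulated value lies in $[1,\,m\cdot v_{\max}]$, the equivalence classes are the intervals $[\alpha^{k-1},\alpha^k]$ for $k=1,\dots,K$ together with $\{0\}$, and $K=\lceil\log_\alpha v_{\max}\rceil=O\!\big(\tfrac{m}{\varepsilon}\log v_{\max}\big)$ is polynomial in the input size. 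Dropping your normalization/thresholding and adopting this convention (or, equivalently, letting the number of buckets depend on $\log(v_{\max}/v_{\min})$, which is also polynomial in the input) fixes the argument with no other changes.
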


The first observation is that we can enumerate all possible utility vectors $\vect{v} = (v_1,\dots,v_n)$ in pseudo-polynomial time since $n$ is constant. Intuitively, this is because in any allocation the utility of agent $i$ is some integer between $0$ and $mv_{max}$, where $v_{max} = \max_{i,j} v_{ij}$. Hence there are at most $(1+mv_{max})^n$ feasible utility vectors. Algorithm~\ref{alg:pseudopoly} computes the set $V_m$ by iteratively computing the sets $V_j$ of feasible utility vectors possible by allocating the first $j$ goods, for $j=1$ to $m$. Here $\vect{e}_i$ denotes a vector of size $n$ with $1$ in the $i^{th}$ coordinate and $0$ elsewhere.

\begin{algorithm}[ht]
\caption{Pseudo-polynomial time algorithm for constantly many asymmetric agents}\label{alg:pseudopoly}
\textbf{Input:} Nash welfare instance $(N,M,V,\{\eta_i\}_{i\in N})$\\
\textbf{Output:} An integral allocation $\x$
\begin{algorithmic}[1]
\State $V_0 \gets \emptyset$
\For{$j=1$ to $m$}
\State $V_j \gets \emptyset$
\For{each $\vect{v}\in V_{j-1}$}
\State $V_j \gets V_j \cup \{\vect{v}+v_{ij}\cdot \vect{e}_i \:|\: i\in [n]\}$
\EndFor
\EndFor
\State \Return $\vect{v}\in V_m$ s.t. $(\prod_{i\in[n]} \vect{v}_i^{\eta_i})^{1/\sum_i\eta_i}$
\end{algorithmic}
\end{algorithm}

We now turn this pseudo-polynomial time algorithm into an FPTAS by discarding unnecessary utility vectors which are almost equivalent from the point of view of Nash welfare. Naturally, doing so introduces some error, but we show that the resulting welfare is not too far from optimal. 

Let $\varepsilon \in (0,1)$ be the desired approximation constant. Let $\alpha = 1+\frac{\varepsilon}{2m}$, and let $K = \lceil\log_{\alpha} v_{max}\rceil$. For each $k\in [K]$, define the interval $L_k = [\alpha^{k-1},\alpha^{k}]$. We now define an equivalence relation $\sim$ given by $\vect{x} \sim \vect{y}$ iff for all $1\le i\le n$, $x_i = y_i = 0$ or $x_i,y_i \in L_k$ for some $k\in [K]$; it is easy to see that $\sim$ is an equivalence relation.

Intuitively, we consider all utility vectors in an equivalence class as equivalent from the point of view of Nash welfare. Hence, we modify Algorithm~\ref{alg:pseudopoly} by maintaining reduced sets $V_{j}^*$ instead of $V_j$. We do this by creating $V_j$ from $V_{j-1}^*$ instead of $V_j$ (Lines 4-5 of Algorithm~\ref{alg:fptas}), and then reducing $V_j$ to $V_j^*$ by keeping only one vector from each equivalence class of $\sim$ and discarding the rest (performed by the $\textsc{Reduce}$ subroutine in Line 6 below).

\begin{algorithm}[ht]
\caption{FPTAS for the Nash welfare problem with constantly many asymmetric agents}\label{alg:fptas}
\textbf{Input:} Nash welfare instance $(N,M,V,\{\eta_i\}_{i\in N})$\\
\textbf{Output:} An integral allocation $\x$
\begin{algorithmic}[1]
\State $V^*_0 \gets \emptyset$
\For{$j=1$ to $m$}
\State $V_j \gets \emptyset$
\For{each $\vect{v}^*\in V^*_{j-1}$}
\State $V_j \gets V_j \cup \{\vect{v}^*+v_{ij}\cdot \vect{e}_i \:|\: i\in [n]\}$
\EndFor
\State $V_j^* \gets \textsc{Reduce}(V_j)$
\EndFor
\State \Return $\vect{v}^*\in V^*_m$ s.t. $(\prod_{i\in[n]} {\vect{v}^*_i}^{\eta_i})^{1/\sum_i\eta_i}$
\end{algorithmic}
\end{algorithm}

We show that Algorithm~\ref{alg:fptas} closely approximates the optimal Nash welfare. For this, first notice that for any two vectors $\x \sim \y$, $x_i \ge y_i/\alpha$ holds for every $i\in[n]$. This is clearly true if both $x_i = y_i = 0$. Otherwise, $x_i,y_i\in L_k = [\alpha^{k-1},\alpha^k]$ for some $k\in[K]$, and hence $x_i/y_i \ge \alpha^{k-1}/\alpha^k = 1/\alpha$.

We relate the vectors in sets $V_j$ of Algorithm~\ref{alg:pseudopoly} and sets $V^*_j$ of Algorithm~\ref{alg:fptas}. Our claim is that for every $\vect{v}\in V_j$, there exists a $\vect{v}^*\in V^*_j$ such that for all $i\in[n]$, $v^*_i \ge \frac{1}{\alpha^j}v_i$. We prove this by induction. The claim is clearly true for $j=1$ since $V_1 = V_1^*$. Assume the claim holds for some $j-1$. Consider any $\vect{v}\in V_j$ created in Line 5 of Algorithm~\ref{alg:pseudopoly} by adding some vector $\vect{w}\in V_{j-1}$ and $v_{\ell j}\cdot \vect{e}_\ell$ for some $\ell\in[n]$. From the induction hypothesis, there exists some $\vect{w}^*\in V_{j-1}^*$ s.t. $w^*_i \ge \frac{1}{\alpha^{j-1}}w_j$. Note that Line 5 of Algorithm~\ref{alg:fptas} will add the vector $\vect{w}^* + v_{\ell j} \vect{e}_\ell$ to $V_j$, but it may get removed during the $\textsc{Reduce}$ operation. In any case, there is some vector $\vect{v}^*\in V_{j}^*$ s.t. $\vect{v}\sim \vect{w}^* + v_{\ell j} \vect{e}_\ell$. Thus:
\[v^*_\ell \ge \frac{1}{\alpha}\cdot(w^*_\ell + v_{\ell j}) \ge \frac{1}{\alpha}\cdot\bigg(\frac{1}{\alpha^{j-1}}w_\ell + v_{\ell j}\bigg) \ge \frac{1}{\alpha^j}\cdot v_\ell,\]
and for every $i\neq \ell$:
\[v^*_i \ge \frac{1}{\alpha}\cdot w^*_i \ge \frac{1}{\alpha^{j}}\cdot w_i = \frac{1}{\alpha^{j}}\cdot v_i,\]
thus completing the induction step.

Let $\vect{v}$ be an optimal vector in $V_m$ returned by Algorithm~\ref{alg:pseudopoly} with optimal Nash welfare $OPT$. Correspondingly, there must be another vector $\vect{v}\in V^*_m$ such that for all $i\in[n]$, $v_i^* \ge \frac{1}{\alpha^m} v_i$. Thus:

\[\bigg(\prod_{i\in[n]} {v_i^*}^{\eta_i}\bigg)^{1/\sum_{i}\eta_i} \ge \bigg(\prod_{i\in[n]} {\bigg(\frac{1}{\alpha^m}\cdot v_i}\bigg)^{\eta_i}\bigg)^{1/\sum_{i}\eta_i}\ge \frac{1}{\alpha^m}\bigg(\prod_{i\in[n]} {v_i}^{\eta_i}\bigg)^{1/\sum_{i}\eta_i} = \frac{1}{\alpha^m}\cdot OPT\]

Finally, the desired approximation follows by noting that $\alpha^m = (1+\frac{\varepsilon}{2m})^m \le e^{\varepsilon/2} \le 1+\varepsilon$. Thus:
\[\bigg(\prod_{i\in[n]} {v_i^*}^{\eta_i}\bigg)^{1/\sum_{i}\eta_i} \ge \frac{1}{1+\varepsilon}\cdot OPT \ge (1-\varepsilon)OPT.\]

We now compute the run-time of Algorithm~\ref{alg:fptas} using the fact that $n$ is constant. Notice that each $V^*_j$ is bounded by $(K+1)^n \in O(K^n)$, since this is the maximum number of equivalence classes of $\sim$. Next observe that each $\textsc{Reduce}$ operation takes time $|V_j|^2 \le (n|V^*_j|)^2 \in O(K^{2n})$. Since there are $m$ iterations, the overall run-time is $O(mK^{2n})$. Now observe that 
\[K = \lceil \log_{\alpha} v_{max} \rceil \le \bigg\lceil \frac{\log v_{max}}{\log (1+\frac{\varepsilon}{2m})} \bigg\rceil \le \bigg\lceil \bigg(1+\frac{2m}{\varepsilon}\bigg)\log v_{max} \bigg\rceil,\]
thus showing that $O(mK^{2n})$ is polynomial in the input size $O(n\cdot m\cdot \log v_{max})$ for any fixed constant $\varepsilon$. Further the dependence on $\varepsilon$ is polynomial, thus showing Algorithm~\ref{alg:fptas} is an FPTAS.

\section{Discussion}\label{app:discussion}
In this paper, we studied the problem of maximizing Nash welfare (MNW) while allocating indivisible goods to asymmetric agents. We presented a PTAS when agents have identical valuations, thereby generalizing the existing PTAS for symmetric agents with identical valuations. Our PTAS also extends to the problem of maximizing the $p$-mean welfare, as well as for computing an MNW allocation when agents have $k$-ary valuations for constant $k$. Next, we studied the MNW problem when each agent has a monotone valuation function and finds at most two goods valuable. In this case, we gave a strongly polynomial-time algorithm. Further, when agents are symmetric, have additive valuations and can value 3 or more goods, the MNW problem is NP-hard, essentially showing the tightness of our polynomial-time algorithm. Finally, we showed that when there are constantly many asymmetric agents, the MNW problem admits an FPTAS, thereby generalizing a similar result for the symmetric case. 

Our work, therefore, identifies and presents algorithms for tractable cases of the maximum Nash welfare problem for the less-understood setting of asymmetric agents. We conclude with three interesting open problems. The first is developing a constant-factor approximation algorithm for asymmetric agents with general non-identical valuations. The second is to investigate the complexity of the problem when agents are of constantly many distinct types in terms of valuations functions. When there is only one type the instance is identical and we have a PTAS, does the problem become hard when there are two types of agents? Lastly, it is also open whether there is a polynomial-time algorithm for asymmetric agents with binary valuations, like it is the case when agents are symmetric.

\bibliography{references}

\begin{thebibliography}{10}

\bibitem{ptas-identical1998alon}
Noga Alon, Yossi Azar, Gerhard~J. Woeginger, and Tal Yadid.
\newblock Approximation schemes for scheduling on parallel machines.
\newblock {\em Journal of Scheduling}, 1(1):55--66, 1998.

\bibitem{amanatidis2020mnwefx}
Georgios Amanatidis, Georgios Birmpas, Aris Filos-Ratsikas, Alexandros
  Hollender, and Alexandros~A. Voudouris.
\newblock Maximum nash welfare and other stories about efx.
\newblock {\em Theoretical Computer Science}, 863:69--85, 2021.
\newblock URL:
  \url{https://www.sciencedirect.com/science/article/pii/S0304397521000931},
  \href {https://doi.org/https://doi.org/10.1016/j.tcs.2021.02.020}
  {\path{doi:https://doi.org/10.1016/j.tcs.2021.02.020}}.

\bibitem{Aziz2020survey}
Haris Aziz.
\newblock Developments in multi-agent fair allocation.
\newblock {\em Proceedings of the AAAI Conference on Artificial Intelligence},
  34(09):13563--13568, Apr. 2020.
\newblock URL: \url{https://ojs.aaai.org/index.php/AAAI/article/view/7082},
  \href {https://doi.org/10.1609/aaai.v34i09.7082}
  {\path{doi:10.1609/aaai.v34i09.7082}}.

\bibitem{BabaioffEF21}
Moshe Babaioff, Tomer Ezra, and Uriel Feige.
\newblock Fair-share allocations for agents with arbitrary entitlements.
\newblock In {\em {EC} '21: The 22nd {ACM} Conference on Economics and
  Computation}, 2021.

\bibitem{Barman18FFEA}
Siddharth Barman, Sanath~Kumar Krishnamurthy, and Rohit Vaish.
\newblock Finding fair and efficient allocations.
\newblock In {\em Proceedings of the 2018 ACM Conference on Economics and
  Computation (EC)}, pages 557--574, 2018.

\bibitem{barman2018binarynsw}
Siddharth Barman, Sanath~Kumar Krishnamurthy, and Rohit Vaish.
\newblock Greedy algorithms for maximizing {Nash} social welfare.
\newblock In {\em Proceedings of the 17th International Conference on
  Autonomous Agents and MultiAgent Systems (AAMAS)}, page 7–13, 2018.

\bibitem{benabbou18publichousing}
Nawal Benabbou, Mithun Chakraborty, Xuan-Vinh Ho, Jakub Sliwinski, and Yair
  Zick.
\newblock Diversity constraints in public housing allocation.
\newblock In {\em Proceedings of the 17th International Conference on
  Autonomous Agents and MultiAgent Systems}, AAMAS '18, page 973–981,
  Richland, SC, 2018. International Foundation for Autonomous Agents and
  Multiagent Systems.

\bibitem{caragiannis16nsw-ef1}
Ioannis Caragiannis, David Kurokawa, Herv\'{e} Moulin, Ariel~D. Procaccia,
  Nisarg Shah, and Junxing Wang.
\newblock The unreasonable fairness of maximum {N}ash welfare.
\newblock In {\em Proceedings of the 2016 ACM Conference on Economics and
  Computation (EC)}, page 305–322, 2016.

\bibitem{chae10mnw-application}
Suchan Chae and Hervé Moulin.
\newblock {Bargaining among groups: an axiomatic viewpoint}.
\newblock {\em International Journal of Game Theory}, 39(1):71--88, 2010.

\bibitem{chak2020wef1}
Mithun Chakraborty, Ayumi Igarashi, Warut Suksompong, and Yair Zick.
\newblock Weighted envy-freeness in indivisible item allocation.
\newblock {\em {ACM} Trans. Economics and Comput.}, 9(3):18:1--18:39, 2021.

\bibitem{ChakrabortySS21}
Mithun Chakraborty, Ulrike Schmidt{-}Kraepelin, and Warut Suksompong.
\newblock Picking sequences and monotonicity in weighted fair division.
\newblock {\em Artif. Intell.}, 301:103578, 2021.

\bibitem{cole17nswapprox}
Richard Cole, Nikhil Devanur, Vasilis Gkatzelis, Kamal Jain, Tung Mai, Vijay~V.
  Vazirani, and Sadra Yazdanbod.
\newblock Convex program duality, {F}isher markets, and {N}ash social welfare.
\newblock In {\em Proceedings of the 2017 ACM Conference on Economics and
  Computation}, EC '17, page 459–460, New York, NY, USA, 2017. Association
  for Computing Machinery.
\newblock \href {https://doi.org/10.1145/3033274.3085109}
  {\path{doi:10.1145/3033274.3085109}}.

\bibitem{cole2015nswapprox}
Richard Cole and Vasilis Gkatzelis.
\newblock Approximating the {Nash} social welfare with indivisible items.
\newblock In {\em Proceedings of the Forty-Seventh Annual ACM Symposium on
  Theory of Computing (STOC)}, page 371–380, 2015.

\bibitem{darmann2014binary}
Andreas Darmann and Joachim Schauer.
\newblock Maximizing {Nash} product social welfare in allocating indivisible
  goods.
\newblock {\em SSRN Electronic Journal}, 247, 01 2014.

\bibitem{degefu16mnw-application}
Dagmawi~Mulugeta Degefu, Weijun He, Liang Yuan, and Jian~Hua Zhao.
\newblock Water allocation in transboundary river basins under water scarcity:
  a cooperative bargaining approach.
\newblock {\em Water Resources Management: An International Journal, Published
  for the European Water Resources Association (EWRA)}, 30(12):4451--4466,
  2016.

\bibitem{ptas99epstein}
Leah Epstein and Ji{\v{r}}{\'i} Sgall.
\newblock Approximation schemes for scheduling on uniformly related and
  identical parallel machines.
\newblock In Jaroslav Ne{\v{s}}et{\v{r}}il, editor, {\em Algorithms - ESA' 99},
  pages 151--162, Berlin, Heidelberg, 1999. Springer Berlin Heidelberg.

\bibitem{farhadi19entitlements}
Alireza Farhadi, Mohammad Ghodsi, MohammadTaghi Hajiaghayi, S\'{e}bastien
  Lahaie, David Pennock, Masoud Seddighin, Saeed Seddighin, and Hadi Yami.
\newblock Fair allocation of indivisible goods to asymmetric agents.
\newblock {\em J. Artif. Int. Res.}, 64(1):1–20, January 2019.
\newblock \href {https://doi.org/10.1613/jair.1.11291}
  {\path{doi:10.1613/jair.1.11291}}.

\bibitem{GargHM19}
Jugal Garg, Martin Hoefer, and Kurt Mehlhorn.
\newblock Satiation in {F}isher markets and approximation of {N}ash social
  welfare.
\newblock {\em CoRR}, abs/1707.04428v3, 2019.

\bibitem{GargHV21Rado}
Jugal Garg, Edin Husic, and L{\'{a}}szl{\'{o}}~A. V{\'{e}}gh.
\newblock Approximating nash social welfare under rado valuations.
\newblock In {\em {STOC} '21: 53rd Annual {ACM} {SIGACT} Symposium on Theory of
  Computing}, pages 1412--1425, 2021.

\bibitem{garg20submodular}
Jugal Garg, Pooja Kulkarni, and Rucha Kulkarni.
\newblock Approximating {N}ash social welfare under submodular valuations
  through (un)matchings.
\newblock In {\em Proceedings of the Thirty-First Annual ACM-SIAM Symposium on
  Discrete Algorithms (SODA)}, page 2673–2687, 2020.

\bibitem{GargM21}
Jugal Garg and Aniket Murhekar.
\newblock On fair and efficient allocations of indivisible goods.
\newblock In {\em Proceedings of the 35th AAAI Conference on Artificial
  Intelligence}, 2021.

\bibitem{ptas-rounding1986shmoys}
Darit~S. Hochbaum and David~B. Shmoys.
\newblock A polynomial approximation scheme for machine scheduling on uniform
  processors: Using the dual approximation approach.
\newblock In {\em Foundations of Software Technology and Theoretical Computer
  Science (FSTTCS)}, pages 382--393, 1986.

\bibitem{ptas-rounding1987shmoys}
Dorit~S. Hochbaum and David~B. Shmoys.
\newblock Using dual approximation algorithms for scheduling problems
  theoretical and practical results.
\newblock {\em J. ACM}, 34(1):144–162, January 1987.
\newblock \href {https://doi.org/10.1145/7531.7535}
  {\path{doi:10.1145/7531.7535}}.

\bibitem{houba13mnw-application}
Harold Houba, Gerard van~der Laan, and Yuyu Zeng.
\newblock {Asymmetric Nash Solutions in the River Sharing Problem}.
\newblock Tinbergen Institute Discussion Papers 13-051/II, Tinbergen Institute,
  April 2013.
\newblock URL: \url{https://ideas.repec.org/p/tin/wpaper/20130051.html}.

\bibitem{Kaneko1979TheNS}
M.~Kaneko and Kenjiro Nakamura.
\newblock The {N}ash social welfare function.
\newblock {\em Econometrica}, 47:423--435, 1979.

\bibitem{kelly97charging}
Frank Kelly.
\newblock Charging and rate control for elastic traffic.
\newblock {\em European Transactions on Telecommunications}, 1997.

\bibitem{laruelle07mnw-application}
Annick Laruelle and Federico Valenciano.
\newblock {Bargaining in committees as an extension of Nash's bargaining
  theory}.
\newblock {\em Journal of Economic Theory}, 132(1):291--305, January 2007.
\newblock URL:
  \url{https://ideas.repec.org/a/eee/jetheo/v132y2007i1p291-305.html}.

\bibitem{lee2015nsw-apx}
Euiwoong Lee.
\newblock {APX}-hardness of maximizing {N}ash social welfare with indivisible
  items.
\newblock {\em Information Processing Letters}, 122, 2015.

\bibitem{McGlaughlinG20}
Peter McGlaughlin and Jugal Garg.
\newblock Improving {N}ash social welfare approximations.
\newblock {\em J. Artif. Intell. Res.}, 68:225--245, 2020.

\bibitem{Moulin2003}
Hervé Moulin.
\newblock {\em {Fair Division and Collective Welfare}}.
\newblock The MIT Press, 01 2003.
\newblock \href {https://doi.org/10.7551/mitpress/2954.001.0001}
  {\path{doi:10.7551/mitpress/2954.001.0001}}.

\bibitem{nashbargaining}
John Nash.
\newblock The bargaining problem.
\newblock {\em Econometrica}, 18(2):155--162, 1950.
\newblock URL:
  \url{https://EconPapers.repec.org/RePEc:ecm:emetrp:v:18:y:1950:i:2:p:155-162}.

\bibitem{nguyen2014nsw-np}
Nhan-Tam Nguyen, Trung~Thanh Nguyen, Magnus Roos, and J\"{o}rg Rothe.
\newblock Computational complexity and approximability of social welfare
  optimization in multiagent resource allocation.
\newblock {\em Autonomous Agents and Multi-Agent Systems}, 28(2):256–289,
  March 2014.
\newblock \href {https://doi.org/10.1007/s10458-013-9224-2}
  {\path{doi:10.1007/s10458-013-9224-2}}.

\bibitem{identical-ptas-mnw}
Trung~Thanh Nguyen and J\"{o}rg Rothe.
\newblock Minimizing envy and maximizing average {N}ash social welfare in the
  allocation of indivisible goods.
\newblock {\em Discrete Appl. Math.}, 179(C):54–68, December 2015.
\newblock \href {https://doi.org/10.1016/j.dam.2014.09.010}
  {\path{doi:10.1016/j.dam.2014.09.010}}.

\bibitem{othman2010coursealloc}
Abraham Othman, Tuomas Sandholm, and Eric Budish.
\newblock Finding approximate competitive equilibria: Efficient and fair course
  allocation.
\newblock In {\em Proceedings of the 9th International Conference on Autonomous
  Agents and MultiAgent Systems (AAMAS)}, page 873–880, 2010.

\bibitem{prendergast2016foodbank}
Canice Prendergast.
\newblock The allocation of food to food banks.
\newblock {\em EAI Endorsed Trans. Serious Games}, 3:e4, 2016.

\bibitem{thomson09mnw-application}
William Thomson.
\newblock Bargaining and the theory of cooperative games: {John Nash} and
  beyond.
\newblock {\em University of Rochester - Center for Economic Research (RCER),
  RCER Working Papers}, 01 2009.

\bibitem{VARIAN1974-efpo}
Hal~R Varian.
\newblock Equity, envy, and efficiency.
\newblock {\em Journal of Economic Theory}, 9(1):63 -- 91, 1974.

\end{thebibliography}
\end{document}